



\documentclass[a4paper,UKenglish]{lipics}
 
\usepackage{microtype}
\usepackage{amsmath,amssymb,graphicx}
\usepackage{algorithm}
\usepackage[noend]{algorithmic}
\usepackage{snapshot}




\newcommand{\IR}{\ensuremath{\mathbb{R}}}

\newcommand{\lee}{\leqslant}
\newcommand{\gee}{\geqslant}

\newcommand{\eps}{\varepsilon}

\newcommand{\REM}[1]{}


\newcommand{\eq}{{\ \leftarrow\ }}


\newcommand{\CR}{{\mathscr R}}
\newcommand{\CB}{{\mathscr B}}
\newcommand{\CS}{{\mathscr S}}
\newcommand{\CC}{{\mathscr C}}
\newcommand{\CO}{{\mathscr O}}
\newcommand{\Pol}{{\mathscr P}}

\newcommand{\PolSeg}{{P}}

\newcommand{\CQ}{{\mathscr Q}}

\newcommand{\CH}{{\mathscr CH}}

\newcommand{\gre}{{g}}
\newcommand{\sma}{{s}}
\newcommand{\cfev}{{l}} 
\newcommand{\sq}{{\CS\CQ}}

\newcommand{\R}{\CR}

\newcommand{\Frechet}{Fr\'echet }
\newcommand{\distF}{\delta_F}
\newcommand{\DoubleB}{Double-TypeB}

\DeclareMathOperator{\Left}{left}
\DeclareMathOperator{\Right}{right}

\newcommand{\Seg}[1]{{\overline{#1}}}

\newcommand{\Dir}{\overrightarrow}

\newcommand{\lei}{\prec}
\newcommand{\lex}{\preceq}


\newcommand{\ap}{\oplus}
\newcommand{\pset}{S}
\newcommand{\Qs}{x} 
\newcommand{\Ps}{x'} 
\newcommand{\Good}{\mbox{type A}}
\newcommand{\SemiBad}{\mbox{type B}}
\newcommand{\Bad}{\mbox{type C}}

\newcommand{\see}{\leadsto}

\newcommand{\lme}[1]{{\lambda_{#1}}}


\bibliographystyle{plain}

\title{Visiting All Sites with Your Dog 
\footnote{Submitted to STACS 2013, on Sep 21, 2012}
\footnote{Research supported by NSERC}}

\author[1]{Anil Maheshwari}
\author[2]{J\"{o}rg-R\"{u}diger Sack }
\author[3]{Kaveh Shahbaz}

\affil[1,2,3]{~School of Computer Scince \\ Carleton University 
\\ Ottawa, Ontario, Canada \\
Email: {\tt \{anil,sack,kshahbaz\}@scs.carleton.ca}}



\keywords{\Frechet Distance, Similarity of Curves}

\serieslogo{}
\volumeinfo
  {Billy Editor, Bill Editors}
  {2}
  {Conference title on which this volume is based on}
  {1}
  {1}
  {1}
\EventShortName{}
\DOI{10.4230/LIPIcs.xxx.yyy.p}

\begin{document}

\maketitle

\begin{abstract}
Given a polygonal curve $P$, a pointset $\pset$,
and an $\eps > 0$,
we study the problem of finding a polygonal curve $Q$
whose vertices are from $\pset$ and 
has a \Frechet distance less or equal to $\eps$ to curve $P$.
In this problem, $Q$ must visit every point in $\pset$
and we are allowed to reuse points of pointset in building $Q$.
First, we show that this problem in NP-Complete. 
Then, we present a polynomial time algorithm for 
a special cases of this problem, when 
$P$ is a convex polygon.
\end{abstract}

\section {Introduction}

Geometric pattern matching and recognition 
has many applications in 
geographic information systems, computer aided design,
molecular biology, computer vision, traffic control, medical imaging etc.
Usually these patterns consist of line segments and polygonal curves. 
\Frechet metric is one of the most popular ways to measure the similarity of two curves. 
An intuitive way to illustrate the \Frechet distance is as follows.
Imagine a person walking his/her dog, where the person and the dog, 
each travels a pre-specified curve, from beginning to the end, 
without ever letting go off the leash or backtracking.
The \Frechet distance between the two curves is the minimal length of a leash which is necessary.
The leash length determines how similar the two curves are to each other:
a short leash means the curves are similar,
and a long leash means that the curves are different from each other.

Two problem instances naturally arise:  decision and optimization.
In the {\em decision problem}, one wants to decide whether two polygonal curves $P$  and $Q$
are within $\eps$ \Frechet distance to each other, i.e., if a leash of given length $\eps$ suffices.
In the {\em optimization problem}, one wants to determine the minimum such $\eps$.
In~\cite{AltG95}, Alt and Godau gave an $O(n^2)$ algorithm for the decision problem,
where $n$ is the total number of segments in the curves.
They also solved the corresponding optimization problem in $O(n^2\log n)$ time.

In this paper, we address the following variant of the \Frechet distance problem.
Consider a point set $S \subseteq \IR^d$ and a polygonal curve $P$ in $\IR^d$, 
for $d \gee 2$ being a fixed dimension.
The objective is to decide whether there exists a polygonal curve $Q$ within an  $\eps$-\Frechet
distance to $P$ such that the vertices of $Q$ are all chosen from the 
pointset $\pset$. Curve $Q$ has to visit every point of $\pset$
and it can reuse points. We show that this problem is NP-Complete. 
We then present a polynomial time decision algorithm for a 
special case of the problem where the input curve $P$ is a convex polygon. 

This paper is organized as follows: in Section \ref{sec:NPComp}, 
we establish our NP-Complete proof for the general case of the problem. 
In Section \ref{sec:SpecialCase}, 
we investigate the special case of the problem.
Finally, we conclude in Section \ref{sec:conc} with some open problems.

\section{General Case is NP-Complete}
\label{sec:NPComp}

\subsection{Preliminaries}
Given two curves $\alpha, \beta: [0,1] \rightarrow \IR^d$,
the {\em \Frechet distance\/} between $\alpha$ and $\beta$ is defined as
$
	\distF(\alpha,\beta) = \inf_{\sigma, \tau} \max_{t \in [0,1]} \| \alpha(\sigma(t)), \beta(\tau(t)) \|,
$
where $\sigma$ and $\tau$ range over all strictly monotone increasing continuous functions.
The following two observations are immediate.

\newtheorem{obs}{Observation}

\begin{obs}\label{obs:simple}
	Given four points $a, b,c,d \in \IR^d$, if
	$\| ab\| \lee \eps$ and $\| cd\| \lee \eps$, then
	$\distF(\Dir{ac},\Dir{bd}) \lee \eps$. 
\end{obs}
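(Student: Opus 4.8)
The plan is to exhibit one explicit pair of reparametrizations realizing the bound, so that the infimum in the definition of $\distF$ is automatically at most $\eps$. First I would parametrize the two directed segments linearly and in sync: set $\alpha(t) = (1-t)\,a + t\,c$ and $\beta(t) = (1-t)\,b + t\,d$ for $t \in [0,1]$. These are parametrizations of $\Dir{ac}$ and $\Dir{bd}$ respecting the prescribed orientations. Choosing $\sigma = \tau = \mathrm{id}$ on $[0,1]$, which is strictly monotone increasing and continuous, the definition gives immediately $\distF(\Dir{ac},\Dir{bd}) \lee \max_{t\in[0,1]} \|\alpha(t)-\beta(t)\|$.

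The second step is the pointwise estimate. Since $\alpha(t) - \beta(t) = (1-t)(a-b) + t(c-d)$, the triangle inequality (equivalently, convexity of $\|\cdot\|$ along the segment in parameter space) yields, for every $t\in[0,1]$,
\[
\|\alpha(t)-\beta(t)\| \;\lee\; (1-t)\,\|a-b\| + t\,\|c-d\| \;\lee\; (1-t)\eps + t\eps \;=\; \eps,
\]
using the hypotheses $\|ab\|\lee\eps$ and $\|cd\|\lee\eps$. Combining this with the first step gives $\distF(\Dir{ac},\Dir{bd})\lee\eps$, as claimed.

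I do not expect any genuine obstacle; the argument is a one-line application of convexity of the norm. The only point worth a remark is the degenerate case in which a segment collapses to a point (say $a=c$, so $\Dir{ac}$ is a single point): the same linear maps are still admissible parametrizations — a constant map is continuous, and $\sigma=\tau=\mathrm{id}$ remains strictly monotone — and the pointwise bound is unchanged, so the conclusion still holds.
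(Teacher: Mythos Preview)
Your argument is correct and is exactly the standard one-line justification the paper has in mind when it calls this observation ``immediate'' and omits a proof: synchronize the two linear parametrizations and apply convexity of the norm. There is nothing to add.
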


\begin{obs}\label{obs:concat}
	Let $\alpha_1$, $\alpha_2$, $\beta_1$, and $\beta_2$ 
	be four curves 
	such that $\distF(\alpha_1,\beta_1) \lee \eps$ and
	$\distF(\alpha_2,\beta_2) \lee \eps$. 
	If the ending point of $\alpha_1$ (resp., $\beta_1$), 
	is the same as 
	the starting point of $\alpha_2$  (resp., $\beta_2$),
	then $\distF(\alpha_1 + \alpha_2, \beta_1 + \beta_2) \lee \eps$,
	where $+$ denotes the concatenation of two curves.
\end{obs}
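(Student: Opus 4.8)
The plan is to run the standard ``glueing'' argument for the \Frechet distance: from near-optimal reparametrizations of the two pairs of subcurves, build a single reparametrization of the two concatenated curves by placing the rescaled pieces side by side on the parameter interval. The endpoint hypothesis is precisely what guarantees that the concatenated curves --- and the concatenated reparametrizations --- are continuous at the splice point, so the pieces fit together without a jump.

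First I would pin down the concatenation convention: for curves $\gamma_1,\gamma_2:[0,1]\to\IR^d$ with the endpoint of $\gamma_1$ equal to the starting point of $\gamma_2$, take $(\gamma_1+\gamma_2)(t)=\gamma_1(2t)$ for $t\in[0,\tfrac12]$ and $(\gamma_1+\gamma_2)(t)=\gamma_2(2t-1)$ for $t\in[\tfrac12,1]$; the endpoint condition makes this well defined and continuous. Next, since $\distF(\alpha_1,\beta_1)\lee\eps$, for every $\eta>0$ the definition of the infimum yields strictly increasing continuous bijections $\sigma_1,\tau_1$ of $[0,1]$ with $\max_t\|\alpha_1(\sigma_1(t))-\beta_1(\tau_1(t))\|<\eps+\eta$, and likewise $\sigma_2,\tau_2$ for the pair $(\alpha_2,\beta_2)$.

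Then I would define $\sigma(t)=\tfrac12\sigma_1(2t)$ on $[0,\tfrac12]$ and $\sigma(t)=\tfrac12+\tfrac12\sigma_2(2t-1)$ on $[\tfrac12,1]$, with $\tau$ built the same way from $\tau_1,\tau_2$. Since $\sigma_1(1)=1$ and $\sigma_2(0)=0$, both pieces take the value $\tfrac12$ at $t=\tfrac12$, so $\sigma$ (and $\tau$) is a strictly increasing continuous bijection of $[0,1]$. A direct substitution shows $(\alpha_1+\alpha_2)(\sigma(t))=\alpha_1(\sigma_1(2t))$ and $(\beta_1+\beta_2)(\tau(t))=\beta_1(\tau_1(2t))$ for $t\in[0,\tfrac12]$, and symmetrically with the index-$2$ subcurves for $t\in[\tfrac12,1]$; hence the leash length realized by $(\sigma,\tau)$ on the two concatenations is $<\eps+\eta$, so $\distF(\alpha_1+\alpha_2,\beta_1+\beta_2)<\eps+\eta$. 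Letting $\eta\to0$ finishes the proof.

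I do not expect a genuine obstacle. The one subtlety worth care is that the \Frechet infimum need not be attained in general, which is why I would phrase everything with an arbitrary $\eta>0$ and pass to the limit rather than quoting ``optimal'' reparametrizations; alternatively one could invoke the standard compactness fact that the infimum is attained for the (continuous) curves here and drop the $\eta$. The only other things to verify are the continuity and strict monotonicity of the glued reparametrizations at the splice, both immediate from $\sigma_i(0)=0=\tau_i(0)$ and $\sigma_i(1)=1=\tau_i(1)$.
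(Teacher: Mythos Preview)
Your argument is correct and is exactly the standard gluing construction one would expect here. The paper itself offers no proof at all: the statement is labeled an Observation and is taken as immediate, so there is nothing to compare against beyond noting that your write-up supplies the routine details the authors left implicit.
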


\vspace{0.2 in}
\hspace{-0.2 in}{\bf Notations.}
%
We denote by $P = <p_1p_2p_3...p_n>$, a polygonal curve $P$
with vertices $p_1 p_2 \dots p_n$ in order 
and by $start(P)$ and $end(P)$, we denote 
the starting and ending point of $P$, respectively.
For a curve $P$ and a point $x$, by $P \ap x$, 
we mean connecting $end(P)$ to point $x$
(we use the same notation $P \ap Q$ to show the concatenation of 
two curves $P$ and $Q$).
Let $M(\Seg{ab})$ denote the  midpoint of the line segment $\Seg{ab}$. 
For a point $q$ in the plane, let $x(q)$ and $y(q)$
denote the $x$ and $y$ coordinate of $q$, respectively.

For two line segments $\Seg{ab}$ and $\Seg{cd}$, with $ \Seg{ab} \dashv	\Seg{cd}$,
we denote the intersection point of them. 
Also, 
for a point $a$ and a line segment $\Seg{bc}$,
$a \perp \Seg{bc}$ denotes the point on $\Seg{bc}$ located 
on the perpendicular from $a$ to $\Seg{bc}$. Also, 
$dist(a, \Seg{bc})$ denotes the distance 
between $a$ and segment $\Seg{bc}$.

\begin{definition} \label{def:feasible}
Given a pointset $\pset$ in the plane, let $Curves(\pset)$
be a set of polygonal curves $Q = <q_1 q_2 \dots q_n> $ where: 
$$  \forall{q_i} : q_i \in \pset \mbox{  and  } $$ 
$$  \forall{a} \in \pset: \exists{q_i} \mbox{  s.t. }  q_i = a \mbox{    } $$
\end{definition}

\begin{definition} \label{def:feasibleNPC}
Given a pointset $\pset$, a polygonal curve $P$ and a distance $\eps$, 
a polygonal curve $Q$ is called {\em feasible} if: 
$Q \in Curves(S)$ and $  \distF(P,Q) \le \eps$.
\end{definition}

We show that the problem of deciding whether a 
feasible curve exists or not  is NP-complete.
It is easy to see that this problem is in NP, since 
one can polynomially check whether $Q \in Curves(S)$
and also $\distF(P,Q) \le \eps$, using the algorithm in \cite{AltG95}.

%

\subsection{Reduction Algorithm}

We reduce in Algorithm~\ref{alg:reduction},
an instance of 3CNF-SAT formula $\phi$ 
to an instance of our problem.
The input is a boolean formula 
$\phi$ with $k$ clauses $C_1, C_2, \dots, C_k$ and $n$ variables $x_1, x_2, \dots,x_n$ 
and the output is a pointset $\pset$,
a polygonal curve $P$ in the plane and 
a distance $\eps = 1$.

We construct the pointset $S$ as follows.
For each clause $C_i$, $1 \le i \le k$, in the formula $\phi$, 
we place three points $\{s_i, g_i,c_i\}$, refereed by $cl_i$ points, 
in the plane, which are computed
in the $i$-th iteration of Algorithm \ref{alg:reduction} (from line \ref{l:makeSLoop} to line \ref{l:EndLoopPointSet}).
We define $o_i$ to be $M(\Seg{\sma_i\gre_i})$.
By $\sq_i$, $1\le i \le k$, we denote
a square in the plane, centered at $o_i$, 
with diagonal $\Seg{\sma_i \gre_i}$. 
We refer to $\sq_i$, $1\le i \le k$, as
{\em c-squares}. 
For an example of a pointset $S$ corresponding to a formula, 
see Figure \ref{fig:pathAExample}.

Our reduction algorithm constructs the polygonal curve $P$ 
 through $n$ iterations. In  the $i$-th iteration, $1 \le i \le n$,
it builds a subcurve $\cfev_i$ corresponding to  a variable 
$x_i$ in the formula $\phi$ and appends that curve to $P$.
In addition to those $n$ subcurves, two curves 
$\cfev_{n+1}$ and $\cfev_{n+2}$ are appended to $P$. 
We will later discus the reason we add those 
two curves to $P$. 
Every subcurve  $\cfev_i$ of $P$ starts at point $u$ 
and ends at point $v$.
Furthermore, every $\cfev_i$  goes through c-squares
$\sq_1$ to $\sq_k$ in order, enters each  $\sq_j$ from
the side $\Seg{c_j s_j}$  and exists that square from 
the side $\Seg{c_j g_j}$  (for an illustration, see Figure \ref{fig:pathAExample}). 
Curve $\cfev_i$ itself is built incrementally  
through iterations of the loop at line \ref{l:looptoMakeL} 
of Algorithm \ref{alg:reduction}. 
In the $j$-th iteration, when $\cfev_i$ goes through $\sq_j$,
three points, which are within  $\sq_j$, are added to  $\cfev_i$ 
(these three points are computed through lines 
\ref{l:makeclausestart} to \ref{l:makeclauseend}).
Next, before $\cfev_i$ reaches to $\sq_{j+1}$,
two points,  denoted by $\alpha_j$ and $\beta_j$, are added to that curve 
(these two points are computed in lines \ref{l:alpha} 
and \ref{l:beta}).


Since each $\cfev_i$ corresponds to variable $x_i$ in our approach, 
this is how we simulate $1$ or $0$ values of $x_i$:
Consider a point object $\CO_L$ 
traversing $\cfev_i$, from starting point $u$ to ending point $v$. 
Consider 
another point object $\CO_2$ which wants to 
walk from $u$ to $v$
on a path whose vertices are from points in $S$ and it wants to stay in distance one 
to $\CO_L$. We will show that 
by our construction, object $\CO_2$ has two options, either taking 
the path $A = <u,s_1,g_2,s_3 \dots v>$ or the path $B = <u,g_1,s_2,g_3 \dots v>$ 
(See Figure \ref{fig:pathAExample} and \ref{fig:pathBExample} for an illustration). 
Choosing path $A$ by $\CO_2$ means $x_i = 1$ and choosing path $B$ means $x_i = 0$.
We first prove in Lemma \ref{lemma:PathA} that $\distF(\cfev_i,A) \le 1 $ and 
in Lemma  \ref{lemma:PathB} that $\distF(\cfev_i,B) \le 1 $.
Furthermore, by
Lemma \ref{lemma:NoSwitchFromAtoB}, we prove that 
that as soon as $\CO_2$ chooses 
the path $A$ at point $u$ to walk towards $v$, 
it can not switch to any vertex on path $B$.
In addition, in lemmas \ref{lemma:ABCanSeeC} and \ref{lemma:NOTABCanSeeC}, we prove that 
if $x_i$ appears in the clause $C_j$,
$\CO_2$ could visit point $c_j$ via the path $A$ and not $B$. In contrast, 
when $\neg x_i$ appears in the clause $C_j$,
$\CO_2$ could visit point $c_j$ via the path $B$ and not $A$.
However, when both of  $x_i$ and $\neg x_i$ does not appear 
in $C_j$, $\CO_2$ can not take $A$
or $B$ to visit $c_j$.

\begin{algorithm} 
\caption {{\sc Reduction Algorithm}} 
\label{alg:reduction}
\algsetup{indent=1.5em}
\begin{algorithmic}[1]
	\baselineskip=1.\baselineskip
	\REQUIRE  3SAT formula $\phi$ with $k$ clauses $C_1 \dots C_k$ and $n$ variables $x_1 \dots x_n$

	\vspace{0.1in}
		
	\hspace{-0.2in} {\bf Construct pointset $S$:}  

	\STATE $\pset \leftarrow \emptyset$ \label{l:init}


	\STATE $\gre_1 = (1,1) $ \label{l:makeSStart}

	\FOR {$j = 1$ to $k$}   \label{l:makeSLoop}

	\STATE $\sma_i \leftarrow \big(x(\gre_j)-2,y(\gre_j)-2\big) $
		 \STATE $o_j   \eq  M(\Seg{\sma_j\gre_j})$

		\IF {($j$ is odd)	}

	\STATE $c_j \leftarrow \big(x(s_j), y(\gre_j) \big)$, $w_j  \eq \big(x(o_j)+\frac{1}{4},  y(o_j)-\frac{1}{4} \big)$

	\STATE  $\gre_{j+1} \eq \big (   x(s_{j}) + \frac{1}{4} + 8, y(s_{j}) + \frac{7}{4} +15 \big)$     		\label{l:ComputeNextEven}
    
		\ELSE
	\STATE $c_j \leftarrow \big(x(g_j), y(s_j) \big)$, $w_j  \eq \big(x(o_j)-\frac{1}{4},  y(o_j)+\frac{1}{4} \big)$ 
    
	\STATE  $\gre_{j+1} \eq \big ( x(s_{j}) + \frac{7}{4} + 15, y(s_{j}) + \frac{1}{4} + 8 \big)$     		\label{l:ComputeNextOdd}


	\ENDIF


\STATE $z_j = M(\Seg{c_jw_j})$

    \STATE $S = S \cup \{\sma_j,\gre_j, c_j\}$   \label{l:EndLoopPointSet}

	\ENDFOR

	\IF {($k$ is odd)}  \label{l:ComputeV}
	  \STATE $\eta \eq \big( x(o_k)+1, y(o_k)+4\big)$  
      \STATE $v \eq \big( x(o_k)+1, y(o_k)+9\big)$  
		\ELSE
	   \STATE $\eta \eq \big( x(o_k)+4, y(o_k)+1\big)$  
      \STATE $v \eq \big( x(o_k)+9, y(o_k)+1\big)$  
	\ENDIF

	\STATE $t \eq \big(x(v),y(u) -20\big)$
	\STATE $u = (-9,-1)$

    \STATE $\pset = \pset \cup \{u, v, t\}$ \label{l:makeSEnd}

\vspace{0.15in}

	\hspace{-0.25in} {\bf Construct polygonal curve $P$:}

	\STATE $P \eq \emptyset$ \label{l:makeP}

	\STATE $P \eq P \ap t$

	\FOR { $i = 1$ to $n+2$  }   \label{l:mainstart}

		\STATE  $\cfev_i \eq \emptyset$ \label{l:startofL}

		\STATE  $\cfev_{i} \eq \cfev_{i} \ap u$ 
		\STATE $\cfev_{i} \eq \cfev_{i} \ap (-4,-1) $  \label{l:Adduh1toell}
		\FOR {$j = 1$ to $k$} \label{l:looptoMakeL}

			\IF { ($x_i \in C_j$ and $j$ is odd ) or ($\neg x_i \in C_j$ and $j$ is even ) } \label{l:makeclausestart}	
			\STATE $\cfev_{i} \eq \cfev_{i} \ap M(\Seg{s_jc_j}) \ap c_j \ap w_j  $
			\ELSIF {{ ($\neg x_i \in C_j$ and $j$ is odd ) or ($ x_i \in C_j$ and $j$ is even ) }}
			\STATE $\cfev_{i} \eq \cfev_{i} \ap w_j \ap c_j \ap M(\Seg{g_jc_j} ) $
			\ELSE		
			\STATE $\cfev_i \eq \cfev_i w_j \ap c_j \ap w_j$
			\ENDIF \label{l:makeclauseend}

			\IF {$j \neq k$}

			\STATE $\alpha_j =  \frac{4}{5} g_j + \frac{1}{5} g_{j+1}$ \label{l:alpha}

			\STATE $\beta_j = \frac{1}{5} s_j + \frac{4}{5} s_{j+1} $ \label{l:beta}
			
			\STATE $\cfev_i \eq \cfev_i \ap \alpha_j \ap \beta_j$

			\ENDIF

		\ENDFOR
	
		\STATE $\cfev_i \eq \cfev_i \ap \eta \ap v$ \label{l:subcurve}

		\STATE  $P \leftarrow P \ap \cfev_i$ 
		\STATE  $P \eq P \ap t$

	\ENDFOR

	\vspace{0.05in}
\RETURN  pointset $\pset$, polygonal curve $P$ and distance $\eps = 1$

\end{algorithmic}
\end{algorithm}

\begin{lemma}\label{lemma:PathA}
Consider any subcurve $\cfev_i =<u \dots v>$, $1\le i \le n+2$,  
which is built through lines \ref{l:mainstart} to \ref{l:subcurve} 
of Algorithm \ref{alg:reduction}. Let $A$ be the polygonal curve  $<u\sma_1\gre_2\sma_3\gre_4..v>$. Then, $\distF(\cfev_i,A) \le 1$.
\end{lemma}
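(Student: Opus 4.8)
The plan is to exhibit an explicit monotone correspondence between $\cfev_i$ and $A$ assembled from short pieces, bounding the \Frechet distance of each piece by $1$ with Observation~\ref{obs:simple} and gluing them with Observation~\ref{obs:concat}. It is cleanest to phrase this as producing a \emph{common vertex refinement}: a sequence $u=q_0,q_1,\dots,q_N=v$ obtained from the vertex list of $\cfev_i$ by inserting a bounded number of points on its edges, together with a sequence $u=b_0,b_1,\dots,b_N=v$ obtained in the same way from the vertex list of $A$, such that $\|q_\ell-b_\ell\|\le 1$ for all $\ell$. From such a refinement, Observation~\ref{obs:simple} gives $\distF(\Dir{q_\ell q_{\ell+1}},\Dir{b_\ell b_{\ell+1}})\le 1$ for every $\ell$, and a repeated application of Observation~\ref{obs:concat} then gives $\distF(\cfev_i,A)\le 1$.

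First I would write out the vertex list of $\cfev_i$ in the order produced by lines~\ref{l:mainstart}--\ref{l:subcurve}, namely $u,\ (-4,-1),\ e_1,c_1,x_1,\ \alpha_1,\beta_1,\ e_2,c_2,x_2,\ \dots,\ e_k,c_k,x_k,\ \eta,\ v$, where $e_j$ and $x_j$ denote the entry and exit vertices of the square $\sq_j$ chosen in lines~\ref{l:makeclausestart}--\ref{l:makeclauseend} (each of $e_j,x_j$ is either a side-midpoint of $\sq_j$ or the point $w_j$, depending on how $x_i$ occurs in $C_j$), together with the vertex list of $A$, namely $u,\ a_1,\dots,a_k,\ v$ with $a_j=\sma_j$ for odd $j$ and $a_j=\gre_j$ for even $j$. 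Then I would assign to each vertex of $\cfev_i$ its partner on $A$: $u\mapsto u$; $(-4,-1)\mapsto(-4,-1)$, which lies on the first edge $\Seg{u\sma_1}$ of $A$ since $u=(-9,-1)$ and $\sma_1=(-1,-1)$; for each odd $j$, send $e_j,c_j,x_j$ to three points (in this order) of the single edge $\Seg{\sma_j\gre_{j+1}}$ of $A$ and $\alpha_j,\beta_j$ to two later points of that same edge (so that the whole stretch of $\cfev_i$ from $e_j$ up to $e_{j+1}$ maps monotonically into one edge of $A$, the vertices sitting near $\sma_j$, $c_j$, the centre, then near $\gre_j$, then near $\sma_{j+1}$, then near $\gre_{j+1}$); for each even $j$, send $e_j,c_j,x_j$ to three points of the edge $\Seg{\sma_{j-1}\gre_j}$ near its $\gre_j$-end and $\alpha_j,\beta_j$ to two points of the next edge $\Seg{\gre_j\sma_{j+1}}$; finally send $\eta$ to the point of the last edge of $A$ nearest to $\eta$, and $v\mapsto v$ (with the obvious adjustments at $\sq_k$ according to the parity of $k$). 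Adjoining to $A$ all these partner points together with the $a_j$, and adding to $\cfev_i$ one subdividing vertex on each edge whose $A$-image runs through some $a_j$, yields the required common refinement.

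The only substantial content is verifying the two properties this partner map must have: that each vertex of $\cfev_i$ lies within distance $1$ of the edge of $A$ to which it is sent, and that the partners occur along $A$ in the same order as the vertices occur along $\cfev_i$. This is a finite but delicate coordinate computation using the positions fixed in Algorithm~\ref{alg:reduction}; the constants $\tfrac14$, $8$, $15$, $\tfrac74$, and $\tfrac45$ are chosen precisely so that it goes through, and several of the distances involved are exactly $1$ (for instance $\|M(\Seg{\sma_j c_j})-\sma_j\|=1$), which is why $\eps=1$ is the right threshold. I expect the main obstacle to be monotonicity inside a square: $\cfev_i$ reaches the corner $c_j$ only after its entry vertex and before its exit vertex, both of which lie nearer the centre of $\sq_j$, so the three partners on $\Seg{\sma_j\gre_{j+1}}$ cannot be the orthogonal projections (which would be ordered wrongly) but must instead be nudged within their unit slack to keep the correspondence monotone --- and one has to check that the slack suffices. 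Once both properties are in place, the conclusion follows by the mechanical chain of Observations~\ref{obs:simple} and~\ref{obs:concat} described above.
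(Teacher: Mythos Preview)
Your approach is essentially the same as the paper's: the paper also builds an explicit monotone matching between $\cfev_i$ and $A$ by listing, in Tables~\ref{tab:BaseCasePathA} and~\ref{tab:PathA}, a sequence of paired positions $(\CO_A,\CO_L)$ at mutual distance $\le 1$ and then invoking Observations~\ref{obs:simple} and~\ref{obs:concat}, organising the walk as an induction on the segments of $A$ and splitting into the three cases $x_i\in C_j$, $\neg x_i\in C_j$, and neither. Your ``common vertex refinement'' is exactly this data repackaged, and the monotonicity subtlety you flag inside each $\sq_j$ is precisely what forces the paper's case analysis and its particular choice of anchor points (e.g.\ $z_j$ rather than orthogonal projections).
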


\begin{proof}

We prove the lemma by induction on the number of segments along $A$. 
Consider two point objects $\CO_L$ and $\CO_A$ 
traversing $\cfev_i$ and $A$, respectively (Figure \ref{fig:pathAExample} depicts an instance of $\cfev_i$ and $A$).
We show that $\CO_L$ and $\CO_A$ can walk
their respective curve, from the beginning to
 end, while keeping distance $1$ to each other. 

The base case of induction trivially holds as follows 
(see Figure \ref{fig:PathAClause1} for an illustration):
Table \ref{tab:BaseCasePathA} lists  pairwise location of 
$\CO_L$ and $\CO_A$, where the distance of each pair is at most $1$.
Hence, $\CO_A$ can walk from $u$ to $s_1$ on the 
first segment of $A$ (segment $\Dir{us_1}$), 
while keeping distance $\le 1$ to $\CO_L$.

Assume inductively that $\CO_L$ and $\CO_A$ have feasibly walked along 
their respective curves, until $\CO_A$ reached $s_j$.
Then, as the induction step, 
we 
show that
$\CO_A$ can walk to $g_{j+1}$ and then to $s_{j+2}$ 
, while keeping distance $1$ to $\CO_L$.
Table \ref{tab:PathA} lists pairwise location 
of $\CO_A$ and $\CO_L$ such that $\CO_A$ could reach  $s_{j+2}$.
One can easily check that the distance between pair of points 
in that table is at most one.
 (For an illustration, see Figure \ref{fig:PathA}).

\begin{table}[h]
\centering
\begin{tabular}{ r | l | l  }
  & location of $\CO_A$ & location of $\CO_L$  
 \\
\hline
   if $x_i \in C_j$  & $s_j$ & $M(\Seg{c_js_j})$\\
	& $z_j$ & $c_j$\\ 
	&  & $w_j$\\ 

	& $\Seg{c_jg_j} \dashv \Seg{s_jg_{j+1}} $ & $\Seg{w_{j}\alpha_{j}} \dashv \Seg{c_{j}g_{j}}$ \\

   if $\neg x_i \in C_j$  & $s_j$ & $\Seg{\beta_{j-1}w_j}	\dashv \Seg{c_js_j}$ \\
	& $w_j \perp \Seg{s_jg_{j+1}}$ &$w_j$\\
	& $z_j$ &$z_j$\\
	& &$c_j$\\
	& $\Seg{c_jg_j} \dashv \Seg{s_jg_{j+1}} $ &$ M(\Seg{c_jg_j}) $\\

   if $x_i \notin C_j \& \neg x_i \notin C_j$  & $s_j$ & $\Seg{\beta_{j-1}w_j}	\dashv \Seg{c_js_j}$\\
	&$w_j \perp \Seg{s_jg_{j+1}}$  & $w_j$\\
	&$z_j$  & $z_j$\\

	& &$c_j$\\
	& &$w_j$\\
& $\Seg{c_jg_j} \dashv \Seg{s_jg_{j+1}} $ & $\Seg{w_{j}\alpha_{j}} \dashv \Seg{c_{j}g_{j}}$ \\

\hline
	&  $h_1$ s.t.  $\| h_1\alpha_j \| \le \eps$ & $\alpha_j$\\
	&  	$h_2$ s.t.  $\| h_2 \beta_j \| \le \eps$ & $\beta_j$\\

\hline
if $x_i \in C_{j+1}$ &  $\Seg{s_{j+1}c_{j+1}}	\dashv \Seg{s_{j}g_{j+1}}$    & $\Seg{\beta_jw_{j+1}}	\dashv \Seg{c_{j+1}s_{j+1}}$\\
& $z_{j+1}$ & $w_{j+1}$\\
&  & $z_{j+1}$\\
&  & $c_{j+1}$ \\
&  $g_{j+1}$ & $M(\Seg{c_{j+1}g_{j+1}})$\\

if $\neg x_i \in C_{j+1}$ &  			$\Seg{s_{j+1}c_{j+1}}	\dashv \Seg{s_{j}g_{j+1}}$  & $M(\Seg{s_{j+1}c_{j+1}})$\\
 & $z_{j+1}$ & $c_{j+1}$ \\
 &  & $w_{j+1}$ \\
 &  $g_{j+1}$ & $\Seg{g_{j+1}c_{j+1}}	\dashv \Seg{w_{j+1}\alpha_{j+1}}$ \\

if $x_i \notin C_{j+1} \& \neg x_i \notin C_{j+1}$ &  	$\Seg{s_{j+1}c_{j+1}}	\dashv \Seg{s_{j}g_{j+1}}$  & $\Seg{\beta_jw_{j+1}}	\dashv \Seg{c_{j+1}s_{j+1}}$\\
& $z_{j+1}$ & $w_{j+1}$\\
 & & $c_{j+1}$ \\
 & & $w_{j+1}$ \\
&  $g_{j+1}$ & $\Seg{g_{j+1}c_{j+1}}	\dashv \Seg{w_{j+1}\alpha_{j+1}}$ \\

\hline

	&  $h_3$ s.t.  $\| h_3\alpha_{j+1} \| \le \eps$ & $\alpha_{j+1}$\\
	&  	$h_4$ s.t.  $\| h_4 \beta_{j+1} \| \le \eps$ & $\beta_{j+1}$\\
\hline

 if $\neg x_i \in C_{j+2}$  & $s_{j+2}$ & $\Dir{\alpha_{j+1}w_{j+2}} 	\dashv \Seg{c_{j+2}s_{j+2}}$ \\
   if $ x_i \in C_{j+2}$  & $s_{j+2}$ & $M(\Seg{c_{j+2}s_{j+2}})$\\
   if $x_i \notin C_{j+2} \& \neg x_i \notin C_{j+2}$  & $s_{j+2}$ & $\Dir{\alpha_{j+1}w_{j+2}}	\dashv \Seg{c_{j+2}s_{j+2}}$\\

\end{tabular}
\caption{Distance between pair of points is less or equal to one}
\label{tab:PathA}
\end{table}

Finally, if $k$ is an odd number, then  
$\Dir{s_kv}$ is the last segment along $B$, otherwise, 
$\Dir{g_kv}$ is the last one. In any case, 
that edge crosses  the circle $\CB(\eta,1)$, where $\eta$ is the last vertex of 
$\cfev_i$ before $v$ (point $\eta$ is computed in line \ref{l:ComputeV} of 
Algorithm \ref{alg:reduction}). Therefore, 
 $\CO_A$ can walk to $v$, while keeping distance $1$ to $\CO_L$.

\qed
\end{proof}

\begin{figure}[t]
	\centering
	\includegraphics[width=0.8\columnwidth]{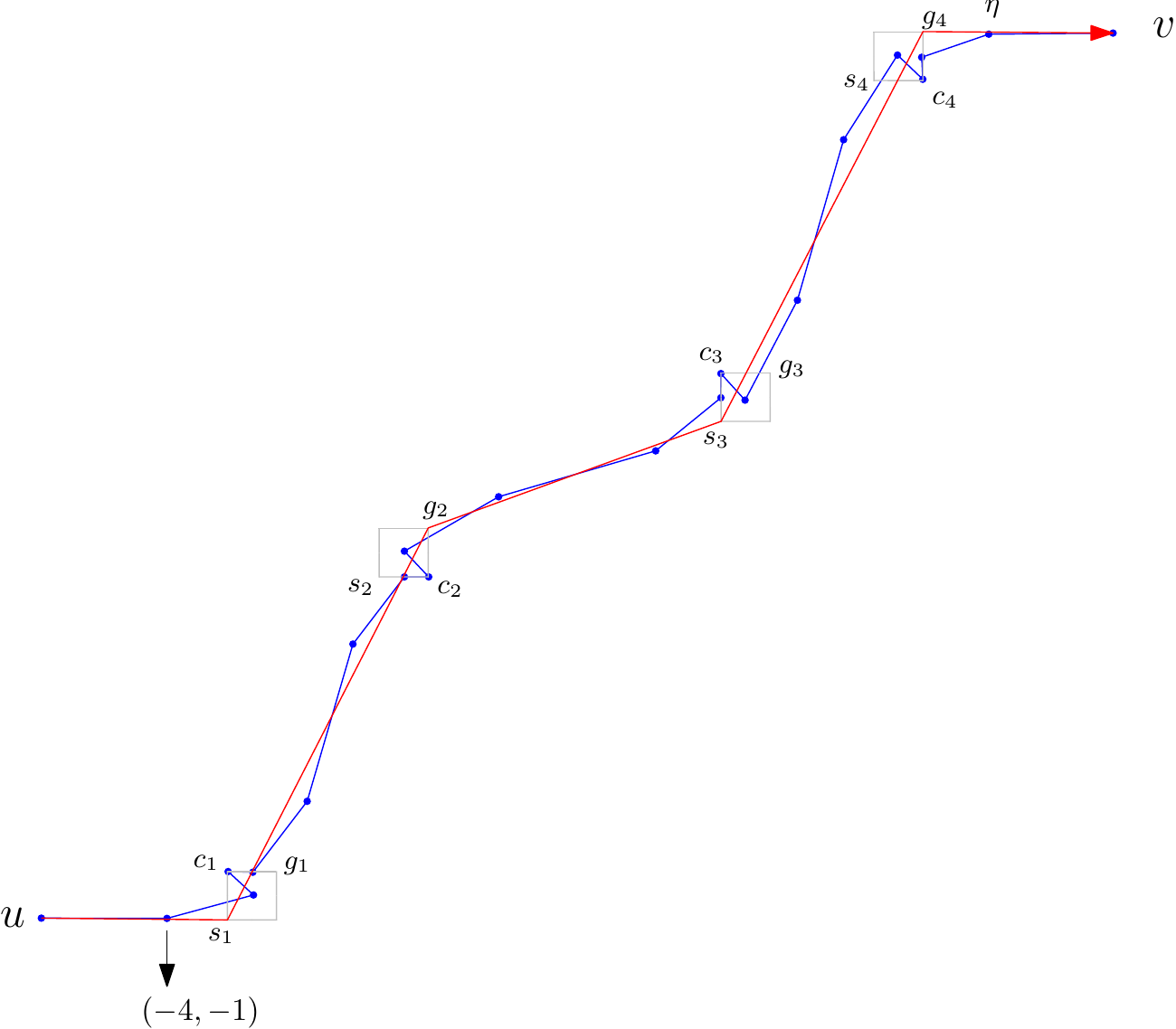}
	\caption{Assume that formula $\phi$ has four clauses $C_1, C_2, C_3$ and $C_4$, 
where the occurrence of variable $x_i$  in those clauses is:
$\neg x_i \in C_1$, $ \neg x_i \in C_2$, $x_i \in C_3$ and $x_i \in C_4$.  
For each clause $C_i$, the reduction algorithm places three point $s_i,g_i$ and $c_i$ in the plane. Blue curve is an example of curve $\cfev_i$ which corresponds to variable $x_i$ in $\phi$. Red curve is curve $A$. 
 }
	\label{fig:pathAExample}
\end{figure}

\begin{lemma}\label{lemma:PathB}
Consider any subcurve $\cfev_i =<u,...,v>$, $1\le i \le n+2$,  
constructed through lines \ref{l:mainstart} to \ref{l:subcurve} 
of Algorithm \ref{alg:reduction}. Let $B$ be the polygonal curve  $<u\gre_1\sma_2\gre_3\sma_4..v>$. Then, $\distF(\cfev_i,B) \le 1$.
\end{lemma}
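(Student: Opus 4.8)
The plan is to mirror the proof of Lemma~\ref{lemma:PathA} almost verbatim, exploiting the symmetry between path $A = <u\sma_1\gre_2\sma_3\gre_4\dots v>$ and path $B = <u\gre_1\sma_2\gre_3\sma_4\dots v>$. Note that $B$ enters each c-square $\sq_j$ through the opposite corner than $A$ does: when $j$ is odd, $A$ passes through $\sma_j$ while $B$ passes through $\gre_j$, and vice versa for even $j$. Because the construction of $\cfev_i$ in lines~\ref{l:makeclausestart}--\ref{l:makeclauseend} distinguishes the parity of $j$ and the sign of the literal exactly so that the ``$A$-friendly'' and ``$B$-friendly'' cases are interchanged, the roles of $x_i\in C_j$ and $\neg x_i\in C_j$ simply swap when we pass from $A$ to $B$. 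So first I would state this parity/literal duality precisely as a one-line observation.

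Then I would carry out the same induction on the number of segments of $B$. Consider point objects $\CO_L$ traversing $\cfev_i$ and $\CO_B$ traversing $B$. The base case: $\CO_B$ walks from $u$ to $\gre_1$ along $\Dir{u\gre_1}$ while $\CO_L$ walks from $u$ through $(-4,-1)$ into $\sq_1$; since $\|u\,\gre_1\| = \|(-9,-1)(1,1)\|$ is controlled the same way $\|u\sma_1\|$ was (indeed $u=(-9,-1)$ was chosen so both the $\sma_1$- and $\gre_1$-first options start feasibly), the analogue of Table~\ref{tab:BaseCasePathA} gives pairs at distance $\le 1$. For the induction step, assuming $\CO_B$ has reached $\gre_j$ (for $j$ odd; $\sma_j$ for $j$ even), I would exhibit a table analogous to Table~\ref{tab:PathA} listing pairwise positions of $\CO_B$ and $\CO_L$ as $\CO_B$ advances to $\sma_{j+1}$ and then $\gre_{j+2}$, obtained from Table~\ref{tab:PathA} by the literal swap described above and by reflecting the geometric witnesses ($z_j$, $w_j$, the intersection points $\Seg{\cdot}\dashv\Seg{\cdot}$, the points $h_i$ near $\alpha_j,\beta_j$) across the appropriate diagonal of $\sq_j$. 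Each entry is at distance $\le \eps = 1$ by the same elementary checks, and Observations~\ref{obs:simple} and~\ref{obs:concat} let me glue the per-square pieces into a single reparametrization. Finally, for the last segment, if $k$ is odd then $\Dir{\gre_kv}$ is the terminal edge of $B$ (if $k$ even, $\Dir{\sma_kv}$), and this edge meets $\CB(\eta,1)$ for the same reason as in Lemma~\ref{lemma:PathA}, so $\CO_B$ reaches $v$ keeping distance $1$ to $\CO_L$ as $\CO_L$ finishes along $\eta\ap v$.

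The main obstacle is purely bookkeeping: one must verify that the coordinate formulas in lines~\ref{l:ComputeNextEven}, \ref{l:ComputeNextOdd}, and the placements of $\sma_j,\gre_j,c_j,w_j,z_j,\alpha_j,\beta_j$ are genuinely symmetric enough that every distance bound used for $A$ transfers to $B$ after the diagonal reflection — in particular that $\|\gre_j\,\alpha_j\|$, $\|\sma_{j+1}\,\beta_j\|$, and the distances from $\cfev_i$'s clause-gadget vertices to $\sma_{j+1}$ or $\gre_{j+1}$ stay $\le 1$ in the $B$-indexing. Since the algorithm was evidently designed with both paths in mind (the ``$\cfev_i$ goes through $\sq_j$ entering side $\Seg{c_js_j}$ and exiting side $\Seg{c_jg_j}$'' remark makes $\cfev_i$ neutral between $A$ and $B$), I expect no real asymmetry, and the proof reduces to citing Lemma~\ref{lemma:PathA}'s argument ``mutatis mutandis'' with the swapped table. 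I would present the swapped table explicitly for completeness but not re-derive each inequality.
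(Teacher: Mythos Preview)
Your proposal is correct and follows essentially the same approach as the paper: the paper's proof (in the appendix) carries out exactly the induction you describe on the segments of $B$, with explicit tables of pairwise positions for the base case ($u$ to $\gre_1$) and the induction step ($\gre_j \to \sma_{j+1} \to \gre_{j+2}$), and concludes with the same $\CB(\eta,1)$ argument for the final segment. Your additional framing via the parity/literal duality is a nice bit of exposition the paper does not make explicit, but the underlying argument is the same.
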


\begin{proof}
The proof is analogous to the proof of Lemma \ref{lemma:PathA}, see appendix.
\end{proof}

\begin{lemma}\label{lemma:NoSwitchFromAtoB}
Consider any curve $\cfev_i \subset P $, $1\le i \le n+2$, and 
a point object $\CO_L$ walking from $u$ to $v$ on $\cfev_i$. 
Also, imagine two point objects $\CO_A$ and 
$\CO_B$, walking on curves $A$ and $B$ 
(from Lemmas \ref{lemma:PathA} and \ref{lemma:PathB}), respectively
while keeping distance $1$ to $\CO_L$. Then,
if $\CO_A$ switches to path $B$ or $\CO_B$ switches to path $A$, 
they loose distance $1$ to $\CO_L$.
\REM{
take any path to a vertex on $B$ and 
stay in distance 1 to $\CO_L$. Similarly, 
$\CO_B$ can not take any path to a vertex on $A$ and 
stay in distance 1 to $\CO_L$.}
\end{lemma}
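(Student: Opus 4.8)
The plan is to argue geometrically that the curves $A = \seq{u \sma_1 \gre_2 \sma_3 \gre_4 \dots v}$ and $B = \seq{u \gre_1 \sma_2 \gre_3 \sma_4 \dots v}$ stay ``far apart'' in a sense that prevents $\CO_A$, once committed to $A$ at $u$, from ever reaching a vertex of $B$ while maintaining leash length $1$ to $\CO_L$ (and symmetrically for $\CO_B$). The key structural fact to isolate first is that consecutive c-squares $\sq_j$ and $\sq_{j+1}$ are placed by lines \ref{l:ComputeNextEven} and \ref{l:ComputeNextOdd} at distance roughly $8$--$15$ apart (the additive constants $+8$ and $+15$ in the coordinates of $\gre_{j+1}$ are exactly what make this gap large compared to $\eps = 1$), and that within each $\sq_j$ the diagonal $\Seg{\sma_j\gre_j}$ has length $2\sqrt 2$, so the two ``rails'' of $A$ and $B$ — namely the $\sma$-vertices and the $\gre$-vertices — are separated by $2\sqrt 2 > 2\eps$. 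Hence at the scale of a single square, the ball $\CB(\sma_j,1)$ and the ball $\CB(\gre_j,1)$ are disjoint.

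Next I would track where $\CO_L$ must be when $\CO_A$ sits at a vertex of $A$. From Lemma~\ref{lemma:PathA} and Table~\ref{tab:PathA}, when $\CO_A$ is at $\sma_j$ (for odd $j$, where $A$ uses the $\sma$-side) the object $\CO_L$ is somewhere on the incoming portion near $\Seg{c_js_j}$ or at $M(\Seg{c_js_j})$; when $\CO_A$ is at $\gre_{j+1}$, $\CO_L$ is near $M(\Seg{c_{j+1}g_{j+1}})$ or the analogous point on the $\gre$-side. The crucial point is that these ``$\CO_L$-positions for $A$'' and the corresponding ``$\CO_L$-positions for $B$'' (from Lemma~\ref{lemma:PathB}) are themselves separated: $\CO_L$ on $\cfev_i$ passes through the sequence $\dots, M(\Seg{s_jc_j}), c_j, w_j, \dots$ or $\dots, w_j, c_j, M(\Seg{g_jc_j}),\dots$, and because $\Frechet$ parametrizations are monotone, once $\CO_L$ has advanced to the part of $\cfev_i$ that ``matches'' being at an $\sma$-side vertex it cannot go back. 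So the argument has two monotonicity layers: $\CO_A$ moves monotonically along $A$, and $\CO_L$ moves monotonically along $\cfev_i$, and a switch from $A$ to $B$ would force $\CO_A$ to move from a neighbourhood of $\sma_j$ to a neighbourhood of $\gre_{j+1}$ (or vice versa) while $\CO_L$ is pinned near a single point of $\cfev_i$ whose distance to the ``wrong'' rail exceeds $1$. I would make this precise by showing: for every vertex $b$ of $B$, and every point $p$ of $\cfev_i$ that lies within distance $1$ of some vertex of $A$ that $\CO_A$ could currently be at, we have $\| b p \| > 1$; this is a finite case check over the vertex types ($\sma$, $\gre$, and the interpolation points $\alpha_j,\beta_j$), using the coordinates from Algorithm~\ref{alg:reduction}.

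The cleanest way to package it: suppose for contradiction that at some time $t_0$ the object $\CO_A$ leaves $A$ and arrives at a vertex $\gre_j$ or $\sma_j$ belonging only to $B$ (the first such crossing). Just before $t_0$, $\CO_A$ was on a segment of $A$ inside or adjacent to $\sq_{j'}$ for some $j'$, and $\CO_L$ was on the corresponding portion of $\cfev_i$; by the monotonicity of both parametrizations and the base/step tables, $\CO_L$ lies in $\sq_{j'}$ or on one of the two connector edges $\Seg{\alpha_{j'}\beta_{j'}}$ near it. Then one computes $\mathrm{dist}(\CO_L,\, \{\sma_j,\gre_j\}_{\text{$B$-vertices reachable by }t_0}) > 1$ using the square side length $2\sqrt2$ and the $\ge 8$ inter-square gap, contradicting the leash constraint. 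The symmetric statement for $\CO_B \to A$ is identical with $\sma \leftrightarrow \gre$.

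I expect the main obstacle to be bookkeeping rather than a conceptual difficulty: one must be careful that a ``switch'' could in principle happen not at a vertex but in the middle of an edge of $A$, so I would first reduce to vertices by noting that if the leash-$1$ walk visits any point of path $B$ it visits a sub-edge of $B$ and hence gets within $1$ of an endpoint of that sub-edge, i.e. a vertex of $B$; and that the monotone matching forces the relevant $\CO_L$-position to lie in a bounded neighbourhood whose distance to all ``not-yet-passed'' $B$-vertices and all ``already-passed'' ones can be bounded below by $1$ separately (already-passed $B$-vertices are far because they sit in earlier squares that $\CO_L$ has left; not-yet-passed ones are far because they sit in later squares). Handling the $\alpha_j,\beta_j$ connector vertices — which are convex combinations $\tfrac45 \gre_j + \tfrac15 \gre_{j+1}$ and $\tfrac15 \sma_j + \tfrac45 \sma_{j+1}$ and therefore lie between squares — is the fiddliest part, and I would check those explicitly against the coordinate formulas in lines \ref{l:alpha}--\ref{l:beta}.
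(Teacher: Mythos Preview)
Your proposal has a genuine gap in the core mechanism. You frame the obstruction as: at the moment $t_0$ when $\CO_A$ arrives at a $B$-vertex $b$, the point $\CO_L$ must sit at a position on $\cfev_i$ that is farther than $1$ from $b$. But this is false in general. If $b=\gre_j$, the curve $\cfev_i$ exits $\sq_j$ through the side $\Seg{c_j\gre_j}$ and therefore contains points within distance $1$ of $\gre_j$ (e.g.\ $M(\Seg{c_j\gre_j})$ is at distance exactly $1$). Nothing in your argument prevents $\CO_L$ from having advanced to such a point by time $t_0$. Your diagonal bound $\|\sma_j\gre_j\|=2\sqrt2$ only shows that $\CB(\sma_j,1)$ and $\CB(\gre_j,1)$ are disjoint; it does not show that the portion of $\cfev_i$ close to $\gre_j$ is unreachable for $\CO_L$. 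And your ``already-passed / not-yet-passed squares'' dichotomy does not cover the within-square switch $\sma_j\to\gre_j$ at all, since both vertices lie in the \emph{same} $\sq_j$.

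The paper's proof is organized around a different (and necessary) idea. It enumerates the possible switching \emph{segments} --- $\Seg{\sma_j\gre_j}$, $\seq{\sma_j c_j \gre_j}$, $\Seg{\sma_j\sma_{j+1}}$, $\Seg{\gre_j\gre_{j+1}}$, $\Seg{c_jc_{j+1}}$, $\Seg{c_j\gre_{j+1}}$, $\Seg{c_j\sma_{j+1}}$, $\Seg{uc_1}$, $\Seg{c_kv}$ --- and for each one exhibits a specific \emph{guard vertex} of $\cfev_i$ (respectively $c_j$, $w_j$, $\alpha_j$, $\beta_j$, $\alpha_j$, $\alpha_j$, $\alpha_j$, $(-4,-1)$, $\eta$) that is at distance strictly greater than $1$ from the entire segment. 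Because that guard vertex sits between the entry and exit sides of the relevant square along $\cfev_i$, monotonicity forces $\CO_L$ to pass through it while $\CO_A$ is still on the switching segment, and the leash breaks there. The obstruction is thus a guard vertex of $\cfev_i$ versus a switching segment, not a distance bound between $\CO_L$'s instantaneous position and the target $B$-vertex. Your proposal never identifies $c_j$ and $w_j$ in this role (and treats $\alpha_j,\beta_j$ only as ``fiddly'' connector points rather than as the guards against the wrong-rail edges $\Seg{\sma_j\sma_{j+1}}$ and $\Seg{\gre_j\gre_{j+1}}$); without these guard-vertex checks the argument does not close.
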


\begin{proof}
See Appendix.
\end{proof}

For $1 \le i \le k$, 
if $i$ is an odd number, set $a_i = \sma_i$ 
and $b_i = \gre_i$  and
if $i$ is an even number, 
set $a_i = \gre_i$ 
and $b_i = \sma_i$.

\begin{lemma}\label{lemma:ABCanSeeC}
Consider the curve $A = <ua_1a_2a_3 \dots a_k v>$ from Lemma \ref{lemma:PathA}. Let 
$A_1$ be a subcurve of $A$ which starts at $u$ and ends at $a_j$, $1 \le j \le k$.
Furthermore, let $A_2$ be a subcurve of $A$ which starts at $a_j$ and ends at $v$.
For any curve $\cfev_i$ , $1\le i \le n+2$,
if $x_i \in C_j$, 
$\distF(A_1 \ap c_j \ap A_2, \cfev_i) \le \eps$. 
Similarly, consider the curve $B = <ub_1b_2b_3 \dots b_k v>$ from Lemma \ref{lemma:PathB}. Let 
$B_1$ be a subcurve of $B$ which starts at $u$ and ends at $b_j$, $1 \le j \le k$.
Furthermore, let $B_2$ be a subcurve of $B$ which starts at $b_j$ and ends at $v$.
For any curve $\cfev_i$ , $1\le i \le n+2$,
if $\neg x_i \in C_j$,   
$\distF(B_1 \ap c_j \ap B_2, \cfev_i) \le \eps$. 

\end{lemma}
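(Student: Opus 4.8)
The plan is to reduce the statement to the already-proved Lemma~\ref{lemma:PathA} together with the concatenation principle of Observation~\ref{obs:concat}. The key point is that when $x_i \in C_j$, the subcurve of $\cfev_i$ that traverses the square $\sq_j$ has, by the construction in lines~\ref{l:makeclausestart}--\ref{l:makeclauseend} of Algorithm~\ref{alg:reduction}, the form $M(\Seg{s_jc_j}) \ap c_j \ap w_j$ (for $j$ odd; the even case is symmetric, replacing $s_j$ by $g_j$ everywhere). So $c_j$ already literally appears on $\cfev_i$. We want to match this portion of $\cfev_i$ against the detour $a_{j-1} \to c_j \to a_{j+1}$ on the modified curve $A_1 \ap c_j \ap A_2$, and match the rest of $\cfev_i$ against the rest of $A$ exactly as in Lemma~\ref{lemma:PathA}.

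First I would split $A_1 \ap c_j \ap A_2$ at the vertex $c_j$ and split $\cfev_i$ at the point where its $\sq_j$-block begins, say just after $\beta_{j-1}$ (or after $u$'s initial segment, if $j=1$), and just before $\alpha_j$ (or before $\eta$, if $j=k$). Call the three pieces of $\cfev_i$ its prefix $\cfev_i^{<j}$, its middle block $\cfev_i^{=j}$, and its suffix $\cfev_i^{>j}$, so $\cfev_i = \cfev_i^{<j} \ap \cfev_i^{=j} \ap \cfev_i^{>j}$. Likewise $A_1 = A_1' \ap a_j$ with $A_1'$ ending the segment into $a_j$, and the detour replaces the two segments $\Dir{a_{j-1}a_j}, \Dir{a_ja_{j+1}}$ of $A$ by $\Dir{a_{j-1}c_j}, \Dir{c_ja_{j+1}}$. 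I would then invoke the walk from the proof of Lemma~\ref{lemma:PathA} verbatim on the prefix and suffix: up to the moment $\CO_A$ (walking $A$) reaches $a_j$, the leash of length $\eps=1$ is maintained, and after $\CO_A$ leaves $a_j$ toward $a_{j+1}$ the same table entries apply. The only new verification is the middle: I must check that as the point object on $A_1 \ap c_j \ap A_2$ walks the two detour segments $\Dir{a_{j-1}c_j}$ and $\Dir{c_ja_{j+1}}$, it stays within distance $1$ of a suitably parametrized walk of $\cfev_i$ across its $\sq_j$-block $M(\Seg{s_jc_j}) \ap c_j \ap w_j$. Since $c_j$ is a common vertex, I can parametrize so that both objects are at $c_j$ simultaneously; this reduces the check to two applications of Observation~\ref{obs:simple} (one for each detour segment), using the coordinates assigned in lines~\ref{l:makeSLoop}--\ref{l:EndLoopPointSet}: one needs $\|a_{j-1}\,M(\Seg{s_jc_j})\| \le 1$, $\|a_{j+1}\,w_j\| \le 1$, and that the segments $\Dir{a_{j-1}c_j}$, $\Dir{c_ja_{j+1}}$ stay within $1$ of the segments $\Seg{M(\Seg{s_jc_j})\,c_j}$, $\Seg{c_j\,w_j}$ respectively, which follows from Observation~\ref{obs:simple} once the endpoint inequalities hold. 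Finally, Observation~\ref{obs:concat} stitches the three matched pieces together to conclude $\distF(A_1 \ap c_j \ap A_2, \cfev_i) \le \eps$. The second half of the statement, concerning $B$ and $\neg x_i \in C_j$, is entirely symmetric: when $\neg x_i \in C_j$ and $j$ is odd the $\sq_j$-block of $\cfev_i$ is $w_j \ap c_j \ap M(\Seg{g_jc_j})$, again containing $c_j$ as a vertex, and the detour on $B$ is through $b_{j-1} \to c_j \to b_{j+1}$; the same argument with the roles of $s$ and $g$ swapped (and parities flipped) applies, reusing Lemma~\ref{lemma:PathB}.

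The main obstacle I expect is the coordinate bookkeeping in the middle-block verification: one must confirm, from the explicit formulas for $s_j, g_j, c_j, w_j, o_j, z_j$ and for $a_{j\pm1}, b_{j\pm1}, \alpha_{j-1},\beta_{j-1}$, that the relevant segments of the detour really do stay within the unit disk of the corresponding segments of $\cfev_i$'s $\sq_j$-block — in particular near the square's boundary sides $\Seg{c_js_j}$ and $\Seg{c_jg_j}$, where $A$ enters and leaves. This is the same flavor of calculation already done in Table~\ref{tab:PathA}, so I would organize it as a short table of point pairs analogous to that one, each pair at distance $\le 1$ by Observation~\ref{obs:simple}, and refer to the figure for geometric intuition rather than writing out every distance. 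A minor secondary point is handling the boundary indices $j=1$ and $j=k$, where the ``previous'' or ``next'' square block is replaced by the initial segment from $u$ or the final segment to $v$ through $\eta$; these are already covered by the base case and the final paragraph of the proof of Lemma~\ref{lemma:PathA}, so no new work is needed there.
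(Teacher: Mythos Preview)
You have misread the structure of the modified curve. Since $A_1$ ends at $a_j$ and $A_2$ begins at $a_j$, the curve $A_1 \ap c_j \ap A_2$ traces $\ldots, a_{j-1}, a_j, c_j, a_j, a_{j+1}, \ldots$: it makes an \emph{excursion} from $a_j$ to $c_j$ and back, it does not replace $a_j$ by $c_j$. Consequently the two ``detour segments'' are $\Dir{a_jc_j}$ and $\Dir{c_ja_j}$, not $\Dir{a_{j-1}c_j}$ and $\Dir{c_ja_{j+1}}$, and the endpoint inequalities you plan to verify---$\|a_{j-1}\,M(\Seg{s_jc_j})\|\le 1$ and $\|a_{j+1}\,w_j\|\le 1$---are simply false, because $a_{j-1}$ and $a_{j+1}$ sit in the squares $\sq_{j-1}$ and $\sq_{j+1}$, each more than $8$ units away from $\sq_j$ by the placement in lines~\ref{l:ComputeNextEven}--\ref{l:ComputeNextOdd}.

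With the correct reading the proof becomes much shorter than what you propose. Your decomposition into prefix, middle, and suffix is fine, and reusing the walk from Lemma~\ref{lemma:PathA} on the prefix and suffix is exactly right. But for the middle block no segment-to-segment table is needed: when $x_i\in C_j$ the construction puts the vertex $M(\Seg{c_ja_j})$ on $\cfev_i$ (this is $M(\Seg{s_jc_j})$ for $j$ odd and $M(\Seg{g_jc_j})$ for $j$ even), and since $\|c_ja_j\|=2$ this midpoint is at distance exactly $1$ from both $a_j$ and $c_j$. So $\CO_L$ simply \emph{waits} at that midpoint while $\CO_A$ walks $a_j\to c_j\to a_j$, and then both objects resume the Lemma~\ref{lemma:PathA} walk. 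The $B$-case with $\neg x_i\in C_j$ is identical with $b_j$ in place of $a_j$. That is the paper's entire argument.
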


\begin{proof}
When $x_i$ appears in clause $C_j$, point $z = M(\Seg{c_ja_j})$ is 
a vertex of $\cfev_i$. 
Since $\| c_ja_j \| = 2$ and $z$ is the  midpoint of 
 $\Seg{c_ja_j}$,
$\CO_L$ can wait at $z$ while $\CO_A$ visits $c_j$
When $\neg x_i$ appears in clause $C_j$, point $z = M(\Seg{c_jb_j})$ is 
a vertex of $\cfev_i$. 
Since $\| c_jb_j \| = 2$ and $z$ is the  midpoint of 
 $\Seg{c_jb_j}$,
$\CO_L$ can wait at $z$ while $\CO_A$ visits $c_j$
and comes back to $b_j$
\end{proof}

\begin{lemma}\label{lemma:NOTABCanSeeC}
Consider curve $A$ (respectively, $B$).
For any curve $\cfev_i$, $1\le i \le n+2$,
if $ x_i \notin C_j$ and $\neg x_i \notin C_j$,   
then $A$ (resp., $B$) can not be modified to visit $c_j$.
\end{lemma}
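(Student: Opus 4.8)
The plan is to argue by contradiction, showing that if neither $x_i$ nor $\neg x_i$ occurs in clause $C_j$, then no modification of $A$ (the symmetric argument handles $B$) can reach the point $c_j$ while staying within \Frechet distance $\eps = 1$ of $\cfev_i$. First I would recall how $\cfev_i$ is built in the $j$-th iteration of the inner loop when $x_i \notin C_j$ and $\neg x_i \notin C_j$: in this case the algorithm appends $w_j \ap c_j \ap w_j$, so $\cfev_i$ enters the $c$-square $\sq_j$, touches $c_j$, and returns to $w_j$ without passing through either midpoint $M(\Seg{s_jc_j})$ or $M(\Seg{g_jc_j})$. The key geometric fact, already exploited in Lemma~\ref{lemma:ABCanSeeC}, is that $\|c_ja_j\| = \|c_jb_j\| = 2 > 2\eps$, so the \emph{only} way for an object on the modified $A$-curve to visit $c_j$ while its partner $\CO_L$ on $\cfev_i$ stays within distance $1$ is for $\CO_L$ to be parked essentially at the midpoint of the segment joining $c_j$ to whichever vertex of $A$ the curve is ``coming from.''

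Next I would make precise what ``modifying $A$ to visit $c_j$'' means: since $A = \seq{ua_1a_2\dots a_kv}$ visits only the vertices $a_1,\dots,a_k$ together with $u,v$, and $c_j \notin \{a_1,\dots,a_k\}$, a modification that visits $c_j$ must insert $c_j$ somewhere into the vertex sequence — and by Lemma~\ref{lemma:NoSwitchFromAtoB} (no switching between $A$ and $B$) and the fact that $c_j$ lies inside $\sq_j$, the insertion point is forced to be adjacent to $a_j$ (or possibly between $a_{j-1}$ and $a_j$, or $a_j$ and $a_{j+1}$). I would then consider the monotone matching realizing $\distF \le 1$: when the $A$-object sits at $c_j$, the point of $\cfev_i$ matched to it must lie in $\CB(c_j,1)$; but the only portion of $\cfev_i$ inside $\CB(c_j,1)$ is the tiny detour $w_j \ap c_j \ap w_j$ together with a neighborhood of $w_j$ and $c_j$ themselves (here I would use that $w_j$ is placed at distance $1/\sqrt{8}$-ish from $o_j$, so $\|w_jc_j\|$ is small, whereas $s_j,g_j$ and the $\alpha,\beta$ points are all at distance more than $1$ from $c_j$ — this is the routine coordinate check I would relegate to the reduction's placement formulas). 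So $\CO_L$ must be near $w_j$ or on the segment $\Seg{w_jc_j}$ while the $A$-object is at $c_j$.

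The crux of the argument is then a monotonicity/timing obstruction. Before reaching $c_j$ the modified $A$-object is at $a_{j-1}$ or $a_j$ (or somewhere on an incident edge), at distance $2$ from nothing helpful; by Observation~\ref{obs:simple}-type reasoning and the induction of Lemma~\ref{lemma:PathA}, when the $A$-object is at $a_j$ the matched point of $\cfev_i$ is near $M(\Seg{s_jc_j})$ or $M(\Seg{g_jc_j})$ — but in the ``$x_i \notin C_j$'' construction \emph{neither of these midpoints is a vertex of $\cfev_i$}, and in fact the subcurve of $\cfev_i$ traversed during this iteration is exactly $\beta_{j-1} \dots w_j \ap c_j \ap w_j \dots \alpha_j$, whose distance from $a_j$ exceeds $1$ everywhere except at points too far from $c_j$. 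Thus there is no way to first be within $1$ of $\cfev_i$ at $a_j$ and then, moving monotonically forward along $\cfev_i$, be within $1$ of $\cfev_i$ at $c_j$: the ``window'' of $\cfev_i$ that is close to $a_j$ and the window that is close to $c_j$ are either disjoint or occur in the wrong order, forcing backtracking on $\cfev_i$, which the \Frechet matching forbids. I expect this timing/ordering step to be the main obstacle, since it requires carefully pinning down which contiguous arc of $\cfev_i$ lies in each relevant disk $\CB(a_j,1)$, $\CB(c_j,1)$, $\CB(a_{j\pm1},1)$ and checking that they are nested in the order that rules out a monotone reparametrization; the individual distance bounds themselves are immediate from the explicit coordinates produced by Algorithm~\ref{alg:reduction}. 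The symmetric claim for $B$ follows by the same argument with $a_j$ replaced by $b_j$ and the parity of $j$ swapped.
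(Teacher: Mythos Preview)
Your plan is not wrong, but it is far more elaborate than needed, and you never isolate the single geometric fact that actually does all the work. The paper's entire proof is two inequalities: $dist(w_j,\Seg{a_jc_j})>1$ and $dist(w_j,\Seg{b_jc_j})>1$.

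Here is why that suffices and why your ``timing/ordering'' analysis is superfluous. When $x_i\notin C_j$ and $\neg x_i\notin C_j$, the curve $\cfev_i$ contains the subpath $w_j\ap c_j\ap w_j$; in particular $w_j$ is a \emph{vertex} of $\cfev_i$, occurring both immediately before and immediately after $c_j$. Any modification of $A$ that visits $c_j$ must at some moment have $\CO_A$ on the segment $\Seg{a_jc_j}$ (you already argue, via Lemma~\ref{lemma:NoSwitchFromAtoB}, that the insertion is adjacent to $a_j$). During that traversal $\CO_L$ is forced to pass through the vertex $w_j$. But every point of $\Seg{a_jc_j}$ is at distance greater than $1$ from $w_j$ --- for instance, for odd $j$ the segment $\Seg{s_jc_j}$ is vertical at $x=x(s_j)$ while $x(w_j)=x(s_j)+5/4$ --- so the leash breaks right there. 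One vertex, one distance check; no need to catalogue which arcs of $\cfev_i$ meet $\CB(a_j,1)$, $\CB(c_j,1)$, $\CB(a_{j\pm 1},1)$ and in what order.

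Two small inaccuracies in your write-up are symptoms of not having spotted this shortcut. First, you speak of the distance of the relevant subcurve ``from $a_j$'' when the obstruction is distance from the \emph{segment} $\Seg{a_jc_j}$; the former is not enough, since $\cfev_i$ does come within $1$ of $a_j$ (that is exactly what Lemma~\ref{lemma:PathA} uses). Second, you say $\CO_L$ can be ``near $w_j$'' while $\CO_A$ is at $c_j$, but in fact $\|w_jc_j\|=\tfrac{5}{4}\sqrt{2}>1$, so $w_j$ itself already lies outside $\CB(c_j,1)$ --- which is precisely the point.
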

\begin{proof}
This is because $dist(w_j,\Seg{a_jc_j}) >1$
and $dist(w_j,\Seg{b_jc_j}) >1$.
\end{proof}


\vspace{0.5 in}

\begin{theorem}
Given a formula $\phi$ with $k$ clauses $C_1, C_2, \dots, C_k$ and $n$ variables $x_1, x_2 \dots, x_n$,
as input let curve $P$ and pointset $\pset$ be the output of Algorithm \ref{alg:reduction}. 
Then, $\phi$ is satisfiable iff a 
curve $Q \in Curves(S)$ exists such that 
$\distF(P,Q) \le 1$.
\end{theorem}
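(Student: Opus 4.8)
The plan is to prove the two directions of the equivalence separately, using the subcurves $\cfev_i$ as ``variable gadgets'' and the points $c_j$ as ``clause gadgets''. Throughout, recall that $P = t \ap \cfev_1 \ap t \ap \cfev_2 \ap t \ap \dots \ap \cfev_{n+2} \ap t$, so any feasible $Q$ decomposes, via its \Frechet matching with $P$, into blocks: one sub-walk matched to each $\cfev_i$ and short pieces matched to the repeated vertices $t$. The key structural fact I will extract from Lemmas~\ref{lemma:PathA}, \ref{lemma:PathB}, and \ref{lemma:NoSwitchFromAtoB} is that any sub-walk of $Q$ matched to a single $\cfev_i$ must, after leaving $u$, commit to essentially following either $A = \langle u\,\sma_1\gre_2\sma_3\gre_4\dots v\rangle$ or $B = \langle u\,\gre_1\sma_2\gre_3\sma_4\dots v\rangle$ (possibly detouring to some $c_j$ vertices), and cannot mix the two; this is exactly what lets the choice of $A$ versus $B$ in block $i$ encode the truth value of $x_i$.

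For the forward direction ($\phi$ satisfiable $\Rightarrow$ feasible $Q$ exists), I would fix a satisfying assignment and, for each clause $C_j$, pick a literal in $C_j$ that is true; say it is $x_i$ (the case $\neg x_i$ is symmetric). I build $Q$ block by block: in block $i$, if $x_i$ is true use path $A$, otherwise use path $B$; additionally, whenever $C_j$ is the clause whose chosen true literal involves $x_i$, insert the detour to $c_j$ as in Lemma~\ref{lemma:ABCanSeeC}. Since every clause is satisfied, every point $c_j$ gets visited by at least one block, and the points $u,v$ are visited in every block while $t$ is visited in the $t$-pieces; hence every point of $\pset$ is visited, so $Q \in Curves(\pset)$. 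The \Frechet bound $\distF(P,Q)\le 1$ then follows by concatenating, via Observation~\ref{obs:concat}, the per-block bounds from Lemmas~\ref{lemma:PathA}/\ref{lemma:PathB} (sharpened by Lemma~\ref{lemma:ABCanSeeC} on the detour blocks) with the trivial bound on the $t$-pieces.

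For the reverse direction ($Q$ feasible $\Rightarrow$ $\phi$ satisfiable), I take a feasible $Q$ and its matching to $P$, and read off for each $i$ whether the $i$-th block of $Q$ follows $A$ or $B$; Lemma~\ref{lemma:NoSwitchFromAtoB} guarantees this is well-defined (no block can straddle both paths). Define the assignment: $x_i = 1$ if block $i$ follows $A$, $x_i = 0$ if it follows $B$. Now fix any clause $C_j$. Since $Q$ visits every point of $\pset$, in particular it visits $c_j$; by the block decomposition, $c_j$ is visited inside some block $i$ (the $t$-pieces stay near $t$, far from $c_j$). By Lemma~\ref{lemma:NOTABCanSeeC}, block $i$'s path can detour to $c_j$ only if $x_i \in C_j$ or $\neg x_i \in C_j$; and by the parity bookkeeping in the construction of $\cfev_i$ (lines~\ref{l:makeclausestart}--\ref{l:makeclauseend}) together with Lemma~\ref{lemma:ABCanSeeC}, path $A$ can reach $c_j$ precisely when $x_i$ occurs positively in $C_j$ and path $B$ precisely when it occurs negatively. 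Hence the literal of $x_i$ in $C_j$ is satisfied by our assignment, so $C_j$ is satisfied; as $j$ was arbitrary, $\phi$ is satisfiable.

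The main obstacle is the reverse direction's claim that the block decomposition is clean — i.e., that a \Frechet matching of $Q$ to $P$ really does force each maximal sub-walk of $Q$ aligned with one $\cfev_i$ to behave like a copy of $A$ or $B$, with $u$ at the start and $v$ at the end, and that $c_j$ can only be reached from within one such block. This rests on Lemma~\ref{lemma:NoSwitchFromAtoB} and on geometric separation (the $t$-pieces, and the ``funneling'' of the leash at $u$, $v$, $\eta$, and through each c-square side), so I would need to argue carefully that the repeated $t$ vertices genuinely segment the walk and that no clause point is accidentally swept up by a $t$-piece or by a block using the ``wrong'' path; the purpose of the extra subcurves $\cfev_{n+1},\cfev_{n+2}$ and of the point $t$ is precisely to make this rigid, and I would make that explicit here.
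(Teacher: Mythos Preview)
Your overall strategy matches the paper's, and the reverse direction is essentially the same argument the paper gives. There is, however, a genuine gap in your forward direction.

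You build $Q$ only for blocks $1,\dots,n$ (one per variable), but $P$ has $n+2$ subcurves $\cfev_1,\dots,\cfev_{n+2}$, so as stated your $Q$ does not even structurally align with $P$. More importantly, your coverage check lists only the $c_j$, $u$, $v$, $t$ and then concludes ``hence every point of $\pset$ is visited'' --- but $\pset$ also contains all the $s_j$ and $g_j$. Path $A=\langle u\,s_1 g_2 s_3 g_4\dots v\rangle$ visits $s_j$ for odd $j$ and $g_j$ for even $j$, while $B$ visits the complementary set; so if the satisfying assignment happens to set every variable true, all of your $n$ blocks use $A$ and the points $g_1,s_2,g_3,\dots$ are never touched. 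The paper's construction handles precisely this: in the two extra blocks it forces one traversal of $A$ and one of $B$, independent of the assignment, so that every $s_j$ and every $g_j$ is guaranteed to be hit. You mention $\cfev_{n+1},\cfev_{n+2}$ only at the end of the reverse direction and describe their purpose as ``making the block decomposition rigid''; that is a misreading --- their role is entirely in the forward direction, to complete the coverage of $\pset$.
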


\begin{proof}


For $(\Rightarrow)$: 
Assume that  formula $\phi$ is satisfied. 
In Algorithm \ref{alg:buildQ}, we show that 
knowing the truth value of the literals in $\phi$, 
we can build a curve $Q$ which 
visits every point in $\pset$ and $\distF(P,Q) \le 1$.

\begin{algorithm} [h]
\caption {{\sc Build a feasible curve $Q$ }} 
\label{alg:buildQ}
\algsetup{indent=1.5em}
\begin{algorithmic}[1]	
		\baselineskip=0.9\baselineskip
	\REQUIRE  Truth table of variables $x_1, x_2, \dots, x_n$ in $\phi$

	\STATE $Q \eq \emptyset$
	\STATE $Q \eq Q \ap t$ \label{l:startPoint}
	 
	\FOR {$i=1$ to $n$}   
	\IF {$(x_i = 1)$}
	\STATE $\pi \eq <ua_1a_2a_3 \dots a_kv>$
	\FORALL {$C_j$ clauses, if $x_i \in C_j$ }
	\STATE let $\pi_1 $ be  subcurve of $\pi$ from $u$	 to $a_j$
	\STATE let $\pi_2 $ be  subcurve of $\pi$ from $a_j$	 to $v$
	\STATE $\pi \eq \pi_1 \ap c_j \ap \pi_2$  \label{l:visitCone}
	\ENDFOR
	\STATE $Q \eq Q \ap \pi$ \label{l:x1}
	\ELSE 	
	\STATE $\pi \eq <ub_1b_2b_3 \dots b_kv>$
	\FORALL {$C_j$ clauses, if $\neg x_i \in C_j$ }
	\STATE let $\pi_1 $ be  subcurve of $\pi$ from $u$	 to $b_j$
	\STATE let $\pi_2 $ be  subcurve of $\pi$ from $b_j$	 to $v$
	\STATE $\pi \eq \pi_1 \ap c_j \ap \pi_2$ \label{l:visitCzero}
	\ENDFOR
	\STATE $Q \eq Q \ap \pi$ \label{l:x0}

	\ENDIF
	\STATE $Q \eq Q \ap t$

	\ENDFOR
	\STATE $Q \eq Q \ap <ua_1a_2a_3 \dots a_kv>$\label{l:nplusone}
	\STATE $Q \eq Q \ap t$

	\STATE $Q \eq Q \ap <ub_1b_2b_3 \dots b_kv>$\label{l:nplustwo}
	\STATE $Q \eq Q \ap t$  \label{l:endPoint}

	\STATE {\bf return} {\sc Q}  
\end{algorithmic}
\end{algorithm}


First we show $\distF(P,Q) \le 1$, where $Q$
is the output curve of Algorithm \ref{alg:buildQ}.
Recall that by Algorithm \ref{alg:reduction}, 
curve $P$ includes $n$ subcurves $\cfev_{i}$ each corresponds 
to a variable $x_i$. 
Both 
curves $P$ and $Q$ start and end at a same point $t$.
For each curve $\pi$ which is appended to $Q$ 
in the $i$-th iteration of Algorithm \ref{alg:buildQ} 
(line \ref{l:x1} or line \ref{l:x0}), 
$\distF(\pi,\cfev_i) \le 1$  by Lemma \ref{lemma:ABCanSeeC}. 
Notice that $P$ also includes two additional subcurves $\cfev_{n+1}$ and $\cfev_{n+2}$ whereas there is no variable $x_{n+1}$ and $x_{n+2}$ in formula $\phi$. These two curves are to resolve two special cases: 
when all variables $x_i$ are 1,  no $\neg x_i$ appears in $\phi$,
and when all variables $x_i$ are 0,  no $x_i$ appears in $\phi$.
Because of these two curves, 
we add two additional curves in line \ref{l:nplusone}
and \ref{l:nplustwo} to $Q$. Finally, by  Observation 
\ref{obs:concat}, $\distF(P,Q) \le 1$.

Next, we show that curve $Q$ visits every point in $S$. First of 
all, by the curves added to $Q$ 
in line \ref{l:nplusone} and \ref{l:nplustwo}, 
all $a_j$ and $b_j$, $1\le j \le k$, in $S$ will be visited. 
It is sufficient to show that $Q$ will visit all $c_j$ points in $S$  as well.
Since  formula $\phi$ is satisfied, every clause $C_i$ in $\phi$ must be satisfied 
too. Fix clause $C_j$. At least one of the literals in $C_j$
must have a truth value $1$. If $x_i \in C_j$ and $x_i = 1$, 
then by line \ref{l:visitCone}, curve $Q$ visits $c_j$.
On the other hand, if $\neg x_i \in C_j$ and $x_i = 0$, 
by line \ref{l:visitCzero}, curve $Q$ visits $c_j$. We conclude that 
curve $Q$ is feasible.

Now we prove the $(\Leftarrow)$ part:

Let $Q$ be a feasible curve with respect to $P$ and pointset $\pset$.
Notice that curve $P$ consists of $n$ subcurve $\cfev_i$, 
$1 \le i \le n $, each corresponds to one variable $x_i$. 
From the configuration of each $\cfev_i$ in c-squares, 
one can easily construct formula $\phi$ with 
all of its clauses and literals. 


Imagine two point objects $\CO_Q$ 
and $\CO_P$ walk on $P$ and $Q$ respectively. 
We find the truth value of variable $x_i$ in the formula
by looking at the path that $\CO_Q$ takes to stay in $1$-\Frechet distance to $\CO_P$, 
when $\CO_P$ walks on curve $\cfev_i$ corresponding to $x_i$.
If $\CO_Q$ takes path $A$ from Lemma \ref{lemma:PathA} 
while $\CO_P$ is walking on  $\cfev_i$, then $x_i = 1$; 
whereas If $\CO_Q$ takes path $B$ from Lemma \ref{lemma:PathB} 
while $\CO_P$ is walking on  $\cfev_i$, then $x_i = 0$. 
Object $\CO_Q$ decides between path $A$ or $B$, 
, when both $\CO_Q$ and $\CO_P$ are at point $u$. 
Lemma \ref{lemma:NoSwitchFromAtoB} ensures that  
once they start walking, 
$\CO_Q$ can not change its path from $A$ to $B$ 
or from $B$ to $A$. 
Therefore, the truth value of a variable $x_i$ is consistent.

The only thing left to show is the reason that formula $\phi$ is satisfiable. 
It is sufficient to show every clause of $\phi$ is satisfiable. 
Consider any clause $C_j$.
Since curve $Q$ is feasible, 
it uses every point in $\pset$.  
Assume w.l.o.g. that $\CO_Q$ visits $c_j$ 
when $\CO_P$ is walking along curve $\cfev_i$.  
By Lemmas 
\ref{lemma:NoSwitchFromAtoB} and \ref{lemma:ABCanSeeC},
this only happens when either ($x_i$ appears in $C_i$ and $x_i = 1$)
or ($\neg x_i$ appears in $C_i$ and $x_i =0$). 
Therefore, $C_j$ is satisfiable.


%
%

%
%
%
%

The last ingredient of the NP-completeness proof is
to show that the reduction takes polynomial time.  
One can easily see that Algorithm \ref{alg:reduction}
has running time $O(nk)$, 
where $n$ is the number of variables in 
the input formula with $k$ clauses.

\end{proof}

To show the correctness of above lemmas, we have implemented our reduction 
algorithm.  We test our algorithm on a 
formula $\phi$ with four clauses.  
This enables us to check all possible 
configurations of $\cfev_i$ in Algorithm \ref{alg:reduction}. 
The program generates three sets, a pointset 
$S = \{ s_1,g_1,c_1, s_2,g_2,c_2, s_3,g_3,c_3, s_4,g_4,c_4 \}$, 
a curve set $L$ and a curve set $C$ as follows.

Imagine a polygonal curve which starts from point $u$, 
goes through points in $S$ and ends at $v$. Our 
program generates all possible such curves and keep them in set $C$.
Therefore, $C$ contains almost 1,000,000,000
curves.

\REM{
Each curve $\mu$ in $C$ is built as follows:

\begin{table}[h]
\centering
\begin{tabular}{ l | c }
  permutations & curves
 \\
\hline
 $C_1$ = all size one permutations of points in $S$ & $\mu_1 = <u, C_1,  v>$\\ 
 $C_2$ = all size two permutations of points in $S$  & $\mu_2 = <u,, C_2,  v>$\\
 $C_3$ = all size three permutations of points in $S$  & $\mu_3 = <u,, C_3, v>$\\
		...&...\\
 $C_{12}$= all size twelve permutations of points in $S$ & $\mu_{12} = <u, C_{12},  v>$ \\
\end{tabular}
\vspace{0.2 in}
\end{table}

$$ C = \bigcup _{i=1}^{12} \mu_i$$

Therefore, the number of curves in $C$ is:
}


Another set $L$ includes all different configuration of curve $\cfev_i$ 
which corresponds to variable $x_i$ in the formula.
Since $x_i$ or $\neg x_i$ or none could appear in one clause 
and the formula has four clauses, the set $L$ contains 81 different curves.

In our application, we compute \Frechet distance between every curve in $C$ and every curve in  $L$. 
The results show that all the curves in $C$
have \Frechet distance greater than $1$ to curves 
in $L$ except two curves $<u,s_1,g_2,s_3,g_4,v>$
and $<u,g_1,s_2,g_3,s_4,v>$. In other words,  
for only 162 pairs of curves, we got:

 $\forall \mu \in L: \distF( \mu , <u,s_1,g_2,s_3,g_4,v>  ) \le 1$  and
 $\forall \mu \in L: \distF( \mu , <u,g_1,s_2,g_3,s_4,v>  ) \le 1$  

In addition to above tests, we verified the correctness of 
Lemma \ref{lemma:ABCanSeeC} in different cases.

\newcommand{\bounC}[1]{{\sigma(\Pol,{#1})}}
\section {Convex Polygon Case}
\label{sec:SpecialCase}

In this section, we address the following 
problem:
given a convex polygon $\Pol$ and a pointset $\pset$ in
the plane, 
find a polygonal curve $Q$ whose vertices are from $\pset$,
and the \Frechet distance between $Q$
and a boundary curve of $\Pol$ is minimum.
Note that 
each point of $\pset$ must be used in $Q$ and it  
can be used more than once. 
In the decision version of the problem,
we want to decide if there is a polygonal curve $Q$ through all points in  
$\pset$, 
whose \Frechet distance to a polygon's boundary curve 
is at most $\eps$, for a given $\eps \gee 0$. 



\subsection{Preliminaries}
\label{subsec:preliminaries}

We borrow some notations 
from \cite{cccg11}
as we make use of the algorithm in that paper
to solve the decision version of our problem.
For any point $p$ in the plane,
we define $\CB(p,\eps) \equiv \{q \in \IR^2 : \|pq\| \lee \eps\}$
to be a \emph{ball} of radius $\eps$ centered at $p$,
where $\|\cdot\|$ denotes  Euclidean distance.
Given a line segment $L \subset \IR^2$,
we define $\CC(L, \eps) \equiv \cup_{p\in L} \CB(p,\eps)$
to be a \emph{cylinder} of radius~$\eps$ around $L$.

In this section, whenever we say polygon, we mean 
a convex polygon. Also, when we 
say a curve visits a point $u$, we mean that 
$u$ is a vertex of the curve.
We denote by $\CH(\pset)$ the convex hull of pointset $\pset$.

%
%
%
For an interval $I$ of points in  $P$,
we denote by $\Left(I)$ and $\Right(I)$
the first and the last point of $I$ along $P$, respectively.
Given two points $u$ and $v$ in $\pset_i$, 
we say $u$ is {\em before\/} $v$, 
and denote it by $u \lex v$ when
$\Left(P_i[u])$ is located before $\Left(P_i[v])$ on $P$.
Moreover, 
we say $u$ is entirely before $v$ (or $v$ is entirely after $u$) 
and denote it by $u \lei v$, when $\Right(P_i[u])$ is located before $\Left(P_i[v])$ on $P$.

\REM{
Consider two points $u \in \R_i$
and $v \in \R_j$, $i<j$. 
We say $v$ is reachable from $u$ (or $u$ can reach  $v$), 
if a semi-feasible curve $R$ exists 
which ends at point $v$ and visits $u$.
Furthermore, we say point $u$ can directly reach  $v$,
if $u$ is a reachable point and $u$
reaches $v$  via edge $\Dir{uv}$.
We call an  edge $\Dir{uv}$ between points $u$ and $v$, 
 a {\em forward edge} when
$u$ and $v$ are in $\R_i$ and 
($u \lei v$ or $u \lex v$). In addition, 
if $u \in \R_i$ 
and $v \in \R_j$, $i < j$, 
and point $u$ reaches $v$ via edge $\Dir{uv}$, 
then we call that edge a forward edge too.
We call an  edge $\Dir{uv}$, a {\em backward edge} if $u$ and $v$ are both in $\R_i$ and 
$v \lei u$. 
%

Let $start(\pi)$ and 
$end(\pi)$ denote the start and end point of polygonal chain 
$\pi$.
}




%
%



%
%
%
%

\vspace{0.3 in}

\hspace{-0.2 in}{\bf The Decision Algorithm in \cite{cccg11} }
Given a polygonal curve $P$ of size $n$ (with 
starting point $s$ and ending point $t$), 
a pointset $\pset$ of size $k$
and a distance $\eps>0$, 
the algorithm in that paper decides in $O(nk^2)$ time,
whether there exists a polygonal curve through some points 
in $\pset$ in $\eps$-\Frechet distance to 
$P$. 

Curve $P$ is composed of $n$ 
line segments $\PolSeg_1 \ldots \PolSeg_{n}$. 
For each segment $\PolSeg_i$,
$C_i$ denotes the cylinder $\CC(\PolSeg_i, \eps)$,
and $\pset_i$ denotes the set $\pset \cap C_i$.
Furthermore, for each point $v \in C_i$,
$\PolSeg_i[v]$  denotes the line segment $\PolSeg_i \cap \CB(v,\eps)$ \cite{cccg11}.
A polygonal curve $R$ is called \emph{semi-feasible} if all its vertices are from $\pset$ and 
$\distF(R, P') \lee \eps$ for a subcurve $P' \subseteq  P$  starting at $s$.
A point $v \in \pset_i$ is called \emph{reachable}, at cylinder $C_i$,
if there is a semi-feasible curve ending at $v$ in $C_i$.

Following is a brief outline of how the decision 
algorithm in \cite{cccg11} works: 
it processes the cylinders one by one from 
$C_1$ to $C_{n}$, and identifies at each cylinder $C_k$
all points of $\pset$ which are reachable at $C_k$.
The reachable points for each cylinder $C_k$, $k$ from 1 to $n$, is maintained in a set
called {\em reachability set}, denoted by $\R_k$.
In the $k$-th iteration of the algorithm,
first all points in $\pset_k$
which are reachable through a point in a set 
$\R_i$, for $1 \lee i < k$, are added to $\R_k$ .
These points are called the \emph{entry points} of cylinder $C_k$. 
We denote, by $\lme{k}$, the leftmost 
entry point of cylinder $C_k$, which 
at this step, equal to $q = \min_{v \in \R_k}{\Left(P_k[v])}$.
Next,  all points in $\pset_k$
which are reachable through $\lme{k}$
are added to $\R_k$.
Finally, the decision algorithm 
returns YES , if  
$\R_n \cap \CB(t,\eps) \not= \emptyset$.


\begin{figure}[t]
	\centering
	\includegraphics[width=0.6\columnwidth]{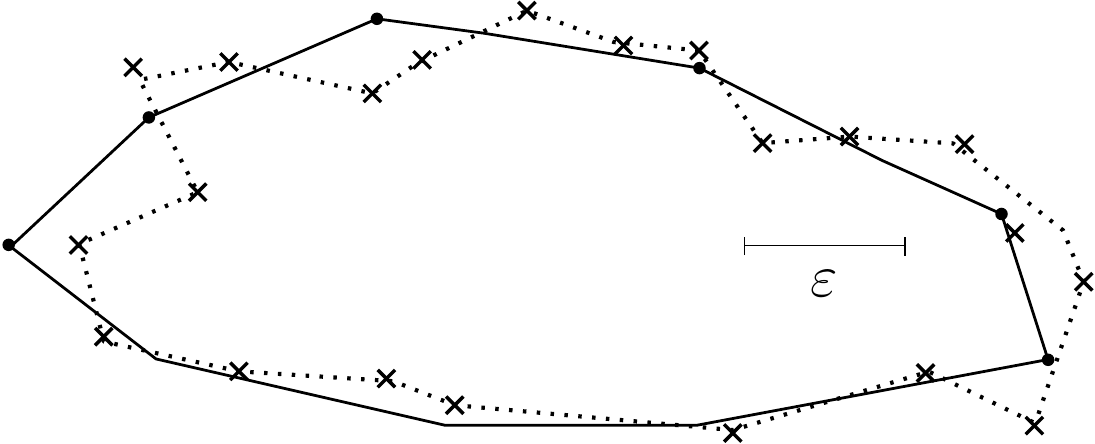}
	\caption{A problem instance}
	\label{fig:types}
\end{figure}

\subsection {Decision Algorithm}
\label{subsec:decAlg}

Let $\Pol$ be a convex polygon of size $n$, 
$S$ be a pointset of size $k$ and $\eps \ge 0$ be an input distance. 
For $\Pol$, we call a curve, {\em a boundary curve} of $\Pol$, denote it by $\bounC{u}$,
if it starts from point $u$ on the boundary of $\Pol$, 
goes around the polygon on the boundary once and ends at 
$u$.

\begin{definition} \label{def:feasible}
Given a pointset $S$, a convex polygon $\Pol$ and a distance $\eps$,  
a polygonal curve $Q$,
is called {\em feasible} if 
$Q \in Curves(S)$ and a boundary curve $\bounC{z}$ exists such that 
$\distF(\bounC{z},Q) \le \eps$.
\end{definition}

As the first step of our algorithm, 
we execute the decision algorithm in \cite{cccg11}. 
Note that here as opposed to in \cite{cccg11}, 
the starting and ending point of 
the curve is unclear because the input is a convex polygon.
Which point on the boundary we choose?
The following lemma justifies our choice later:

\begin{lemma} \label{lema:convexhull}
Given a convex polygon $\Pol$, a pointset $\pset$ and a distance $\eps$,
a necessary condition to have a feasible 
curve through $S$ is: 
$\distF(\sigma(\CH(\pset),z), \bounC{z'} ) \le \eps$, 
where $z$ is a vertex of $\CH(\pset)$ and $z'$ is a point on the boundary of $\Pol$ s.t. $\|zz'\| \le \eps$.
\end{lemma}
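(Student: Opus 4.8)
The plan is to show that any feasible curve $Q$ through $\pset$ must, when restricted appropriately, "trace out" the convex hull of $\pset$, and that this tracing stays within \Frechet distance $\eps$ of the polygon boundary. First I would recall that since $Q \in Curves(\pset)$, every point of $\pset$ is a vertex of $Q$; in particular every vertex of $\CH(\pset)$ appears on $Q$. Since $\distF(\bounC{z},Q) \le \eps$ for some boundary curve $\bounC{z}$ of $\Pol$, a monotone matching $(\sigma,\tau)$ between $\bounC{z}$ and $Q$ exists with the leash never exceeding $\eps$. Fix a vertex $w$ of $\CH(\pset)$; because $w$ lies on $Q$, the matching sends it to some point $w'$ on the boundary of $\Pol$ with $\|ww'\|\le\eps$. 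This already establishes the companion point $z'$ in the statement (taking $w = z$).

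\smallskip

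Next I would extract from $Q$ the sub-sequence of its vertices that are the vertices of $\CH(\pset)$, in the order they occur along $Q$. Traversing $Q$ from start to end, and using the cyclic structure of the boundary curve $\bounC{z}$, these hull vertices must be encountered in an order consistent with one full cyclic traversal of $\CH(\pset)$ — because the matching to $\bounC{z}$ is monotone and $\bounC{z}$ goes around $\partial\Pol$ exactly once, while the hull vertices sit on $\partial\Pol$ (within $\eps$) in convex position. So there is a cyclic ordering in which $Q$ visits all hull vertices, and consecutive hull vertices along $Q$ are matched to arcs of $\partial\Pol$ that appear in the same cyclic order. This lets me compare $\sigma(\CH(\pset),z)$ — the boundary curve of $\CH(\pset)$ starting at $z$ — with $\bounC{z'}$ segment by segment: the edge $\Dir{w_iw_{i+1}}$ of the hull is matched (via the restriction of $(\sigma,\tau)$ to the portion of $Q$ between $w_i$ and $w_{i+1}$) to the corresponding arc of $\partial\Pol$, and by Observation~\ref{obs:concat} these pieces concatenate to give $\distF(\sigma(\CH(\pset),z),\bounC{z'})\le\eps$.

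\smallskip

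The key geometric point I would need to pin down is that the portion of $Q$ strictly between two consecutive hull vertices $w_i,w_{i+1}$ can be matched to the boundary arc of $\partial\Pol$ between $w_i',w_{i+1}'$ with leash $\le\eps$ — this is essentially inherited from the global matching between $Q$ and $\bounC{z}$, since any intermediate vertices of $Q$ on that portion are interior points of $\pset$ that get matched somewhere on that same boundary arc. Then Observation~\ref{obs:simple} (applied to the straight hull edge $\Dir{w_iw_{i+1}}$ versus the matched endpoints on $\partial\Pol$, together with convexity so the hull edge is "inside" and close to the corresponding boundary arc) gives that replacing that portion of $Q$ by the single segment $\Dir{w_iw_{i+1}}$ keeps the distance to the arc within $\eps$. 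Concatenating over all $i$ via Observation~\ref{obs:concat} yields the claim, with $z' = w'$ the boundary point matched to $z = w$.

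\smallskip

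The main obstacle I expect is the monotonicity/ordering argument: showing rigorously that $Q$'s hull vertices occur in convex-position cyclic order and that the matching does not "double back" on $\partial\Pol$ in a way that breaks the segment-by-segment comparison — in particular handling the choice of the cut point $z$ on $\CH(\pset)$ consistently with the start of $\bounC{z}$, and dealing with the fact that $Q$ may revisit points (including hull vertices) multiple times. One has to argue that there is \emph{some} choice of occurrences of the hull vertices along $Q$ for which the induced order is a single cyclic sweep of $\CH(\pset)$; this follows because $\bounC{z}$ sweeps $\partial\Pol$ once and the hull vertices are within $\eps$ of $\partial\Pol$ in convex position, but writing it carefully is the delicate part. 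Everything else is routine once that ordering is in hand.
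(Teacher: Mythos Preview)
Your proposal has a genuine gap at the step where you invoke Observation~\ref{obs:simple}. That observation only tells you that $\distF(\Dir{w_iw_{i+1}},\Dir{w_i'w_{i+1}'})\le\eps$ for the \emph{straight} segment $\Dir{w_i'w_{i+1}'}$ on $\partial\Pol$. What you actually need is that $\Dir{w_iw_{i+1}}$ is within $\eps$ of the whole \emph{boundary arc} of $\Pol$ from $w_i'$ to $w_{i+1}'$, which may consist of several polygon edges. Your parenthetical ``together with convexity so the hull edge is `inside' and close to the corresponding boundary arc'' is precisely the non-trivial content of the lemma; nothing you have derived so far prevents that arc from bulging more than $\eps$ away from the chord $\Dir{w_iw_{i+1}}$. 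So the reduction from $Q$ to the hull boundary is not justified as written.

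The paper's proof supplies exactly the missing geometric argument, and in doing so avoids your ordering extraction entirely. It argues directly that for consecutive hull vertices $v,w$ the segment $\Dir{vw}$ must intersect $\CB(p,\eps)$ for every polygon vertex $p$ lying on the arc between their footpoints $v',w'$. If not, then $\CB(p,\eps)$ lies entirely on one side of the line through $v,w$: on the ``inside'' side this contradicts the convexity of $\Pol$; on the ``outside'' side every point of $\pset$ --- and hence all of $Q\subseteq\CH(\pset)$ --- is more than $\eps$ from $p$, contradicting $\distF(\bounC{z'},Q)\le\eps$. Once each $\CB(p,\eps)$ is hit by $\Dir{vw}$, the edge-by-edge matching of $\Dir{vw}$ to the arc follows. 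Note that this uses feasibility only through the consequence ``every vertex of $\Pol$ has a point of $\CH(\pset)$ within $\eps$'', so your whole program of threading through $Q$'s parametrisation and sorting out the order in which $Q$ visits hull vertices (with its attendant difficulties about repeated visits and the choice of cut point) is unnecessary.
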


\begin{proof}
See Appendix.
\end{proof}

Let $\Qs$ be the point with 
smallest $x$-coordinate in $\pset$
and $\Ps$ be a point on the boundary of $\Pol$
s.t.   $\|\Qs\Ps\| \le \eps$. Furthermore, 
let $\rho = \bounC{\Ps}$ (curve $\rho$ has $n+1$ line segments
$\rho_1, \rho_2, \dots ,\rho_{n+1}$) 
Run the decision algorithm in \cite{cccg11}, 
with parameters $\rho, \pset$ and $\eps$.
Result is the reachability sets 
$\R_1, \R_2, \dots ,\R_{n+1}$ where each $\R_i$ maintains the reachable points at cylinder $C_i$.




\begin{definition}
\label{def:Types}

Consider  two consecutive reachability 
sets  $\R_i$ and $\R_j$, $ 1 \le i<j\le n$.
Let $u$ be a point in $\R_i$. Then, we call $u$
a $\Good$ point at $C_i$ if  there exists a semi-feasible
curve which contains $u$ as its vertex
and ends at $\lme{j}$.
Otherwise, we call $u$ a $\SemiBad$ point at $C_i$ 
and we call $\lme{j}$ the rival of $u$. 
\end{definition}

\begin{obs} \label{obs:orderABC}
Let $u$ and $v$ be $\Good$ and $\SemiBad$  points at cylinder $C_i$, 
respectively. Then, $u \lei v$.
\end{obs}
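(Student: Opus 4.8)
The plan is to argue by contradiction, using the definitions of $\Good$, $\SemiBad$, and the behavior of the decision algorithm of~\cite{cccg11}. Suppose $u$ is $\Good$ and $v$ is $\SemiBad$ at cylinder $C_i$, but that $u \lei v$ fails; that is, $\Right(P_i[v])$ is not located before $\Left(P_i[u])$ on $P$, so $v \lex u$ or the reachability intervals of $u$ and $v$ on $\PolSeg_i$ overlap in the sense that $\Left(P_i[u]) \lex \Right(P_i[v])$. I would first record the following consequence of the algorithm: since $u$ is $\Good$, there is a semi-feasible curve $R$ passing through $u$ and ending exactly at $\lme{j}$, the leftmost entry point of $C_j$; in particular $u$ is reachable at $C_i$ and can directly reach $\lme{j}$, so the edge $\Dir{u\,\lme{j}}$ together with a matching reparametrization keeps the leash within $\eps$ as $P$ advances from $\Left(P_i[u])$ past $P_i$ and into the entry region of $C_j$.

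The key step is then to show that, under the assumption $\Left(P_i[u]) \lex \Right(P_i[v])$, the point $v$ could also follow $u$ on the way to $\lme{j}$, contradicting that $v$ is $\SemiBad$. Concretely: $v$ is reachable at $C_i$, so there is a semi-feasible curve ending at $v$ whose \Frechet\ image on $P$ reaches at least $\Left(P_i[v])$. Because $\Left(P_i[v]) \lex \Right(P_i[v])$ and $\Right(P_i[v])$ is at or beyond $\Left(P_i[u])$, the point $v$ can "wait" (stay put) while $P$ advances to $\Left(P_i[u])$, since $v$ is within $\eps$ of every point of $P_i[v] \supseteq$ that sub-subsegment. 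From there, append the edge $\Dir{v\,u}$: by Observation~\ref{obs:simple} (with the two segment endpoints within $\eps$ of the corresponding endpoints on $\PolSeg_i$, using that $v \in C_i$ and $u \in C_i$), this edge stays within $\eps$ of $P$ restricted to the overlap of $P_i[v]$ and $P_i[u]$. Now $v$ has been routed onto $u$ while $P$ is still inside $C_i$, and from $u$ we invoke the $\Good$-ness of $u$ to continue along $R$'s tail to $\lme{j}$. The concatenation of these pieces (via Observation~\ref{obs:concat}) is a semi-feasible curve containing $v$ and ending at $\lme{j}$, so $v$ is $\Good$ — contradiction.

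The main obstacle I anticipate is making precise the claim that whenever the reachability intervals $P_i[u]$ and $P_i[v]$ are not in the "entirely before" relation, the curve can actually transit from $v$ to $u$ within the cylinder without the monotonicity of the \Frechet\ reparametrization forcing $P$ past the end of $C_i$ prematurely. This requires carefully tracking where $\Left(P_i[v])$, $\Left(P_i[u])$, and $\Right(P_i[v])$ sit relative to one another on $P$ and checking that the "wait at $v$, then jump to $u$" schedule is monotone; one has to also confirm that reaching $u$ this way does not conflict with $u$'s own reachability interval, i.e.\ that $u$ is still reachable "early enough." I would handle this by a short case analysis on the relative order of these three points along $P$, in each case exhibiting the explicit reparametrization, and appeal to Observations~\ref{obs:simple} and~\ref{obs:concat} to bound the leash on each segment. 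Once that transit lemma is in hand, the contradiction closes immediately and the observation follows.
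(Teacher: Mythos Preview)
The paper states this as an ``observation'' and gives no proof, so there is nothing to compare against directly; but your proposed argument has two concrete problems.

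First, you have the negation of $u \lei v$ wrong. By definition, $u \lei v$ means $\Right(P_i[u])$ lies before $\Left(P_i[v])$ on $P$. Its negation is therefore $\Left(P_i[v]) \le \Right(P_i[u])$, not ``$\Right(P_i[v])$ is not located before $\Left(P_i[u])$'' as you wrote. The inequality you work with throughout, $\Left(P_i[u]) \le \Right(P_i[v])$, is the negation of $v \lei u$, a different statement. Your subsequent ``wait at $v$ until $P$ reaches $\Left(P_i[u])$'' step is driven by this wrong inequality.

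Second, and more seriously, the core splice ``route $v$ onto $u$, then follow $R$'s tail to $\lme{j}$'' has a genuine gap that a case split on the order of $\Left(P_i[u])$, $\Left(P_i[v])$, $\Right(P_i[v])$ will not close. The semi-feasible curve $R$ witnessing that $u$ is type~A matches $u$ to some specific position $p_u \in P_i[u]$; the continuation $R$-after-$u$ is only known to be feasible when $P$ starts at $p_u$. After your detour $v \to u$, you arrive at $u$ matched to some $p'_u \ge p_u$. It is \emph{not} automatic that $R$'s tail remains within $\eps$ of the shorter curve $P[p'_u,\cdot]$: in the free-space diagram, having a monotone path from $(0,p_u)$ to the end does not yield one from $(0,p'_u)$, because the portion of the edge $\Dir{u\,\lme{j}}$ that was matched to $P_i[p_u,p'_u]$ may already have left the cylinder $C_i$ and need not lie within $\eps$ of the single point $p'_u$. (You also assert without justification that $u$ ``can directly reach $\lme{j}$''; type~A only gives a curve through $u$ ending at $\lme{j}$, which may pass through other points of $\R_i$ first.) To make the argument go through you would need an additional monotonicity lemma of the form ``if $u$ at position $p$ can reach $\lme{j}$ then $u$ at any later position in $P_i[u]$ can too,'' and that lemma requires its own geometric argument --- it is exactly the content that is missing, not a bookkeeping detail.
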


\begin{obs}
A $\SemiBad$ point $u$ at cylinder $C_i$ is:
\begin{itemize}
\item a $B1$ point at  $C_i$: when $u$ 
doesn't reach to $\lme{j}$ but it reaches to 
some other point $v$ in $\R_j$ 
(see Figure \ref{fig:types}a). 

\item a $B2$ point at $C_i$,
when $u$ does not reach to 
any point in $\R_j$, 
but it reaches to a point in $\R_k$,
$i<j \le k$ (see Figure \ref{fig:types}b).

\item a $B3$ point at $C_i$,
when $u$ does not reach to 
any point in $\R_k$, $i<j \le k$ (see Figure \ref{fig:types}c).

\end{itemize}

\end{obs}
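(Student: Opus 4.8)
The last displayed statement in the excerpt is the observation classifying a $\SemiBad$ point $u$ at cylinder $C_i$ into one of the three types $B1$, $B2$, $B3$. The plan is to argue that these three cases are exhaustive and mutually exclusive, so that the classification is well-defined. First I would recall the setup: $u \in \R_i$ is $\SemiBad$, meaning (by Definition~\ref{def:Types}) that no semi-feasible curve containing $u$ ends at $\lme{j}$, where $\R_j$ is the next reachability set after $\R_i$; its rival is $\lme{j}$. Starting from this, one asks whether $u$ reaches \emph{any} point of $\R_j$ at all. If yes, $u$ is a $B1$ point. If no, one asks whether $u$ reaches some point in a later reachability set $\R_k$, $j \le k$; if yes, $u$ is a $B2$ point, and if no, $u$ is a $B3$ point. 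Since these three questions form a decision tree whose leaves partition all possibilities, the three types are exhaustive and pairwise disjoint.

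The one subtlety I would need to address carefully is the interplay with the decision algorithm of \cite{cccg11}: a point $u$ that is $\SemiBad$ at $C_i$ is still, by hypothesis, a reachable point of $\R_i$, hence lies on some semi-feasible curve $R$ ending at $u$ in $C_i$. To say $u$ ``reaches'' a point $v \in \R_j$ I would use the natural notion: $R$ can be extended (possibly reusing points) through cylinders $C_{i+1}, \dots, C_j$ to a semi-feasible curve ending at $v$. The monotone, left-to-right processing of cylinders in \cite{cccg11} guarantees that ``$u$ reaches some point in $\R_k$'' is a well-defined monotone event, and in particular if $u$ reaches no point of $\R_k$ for the relevant range of $k$, then $u$ contributes nothing past $C_i$, which is exactly the $B3$ case. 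I would also note that Observation~\ref{obs:orderABC} (a $\Good$ point precedes a $\SemiBad$ point in the $\lei$ order along $P$) is consistent with — and in fact motivates — this trichotomy, since a $B3$ point is a ``dead end'' that a feasible curve must have visited earlier.

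The main obstacle I anticipate is not any deep argument but rather pinning down the precise definition of ``$u$ reaches $v$'' in the convex-polygon setting, where the curve $\rho = \bounC{\Ps}$ has $n+1$ segments and wraps around; the $\REM$-ed out block in the excerpt suggests the authors intend the forward/backward-edge machinery from the prior sections, and I would have to restate just enough of it (entry points, the leftmost entry point $\lme{k}$, semi-feasibility) to make the three cases unambiguous. Once that vocabulary is fixed, the observation itself is essentially a tautological case split, so the ``proof'' is really a definitional clarification plus an appeal to the monotonicity of the reachability computation; I would keep it to a few sentences and defer any picture-dependent details to Figure~\ref{fig:types}.
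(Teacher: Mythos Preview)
Your analysis is correct, but note that the paper gives no proof of this observation at all: it is presented purely as a definitional classification (with pointers to Figure~\ref{fig:types}) and the text moves on immediately. Your exhaustive/mutually-exclusive case split is exactly the right justification if one were required, and it matches the intent of the paper; the extra discussion of the reachability machinery and the wrap-around subtlety is more than the authors themselves supply.
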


\begin{figure}[t]
	\centering
	\includegraphics[width=0.9\columnwidth]{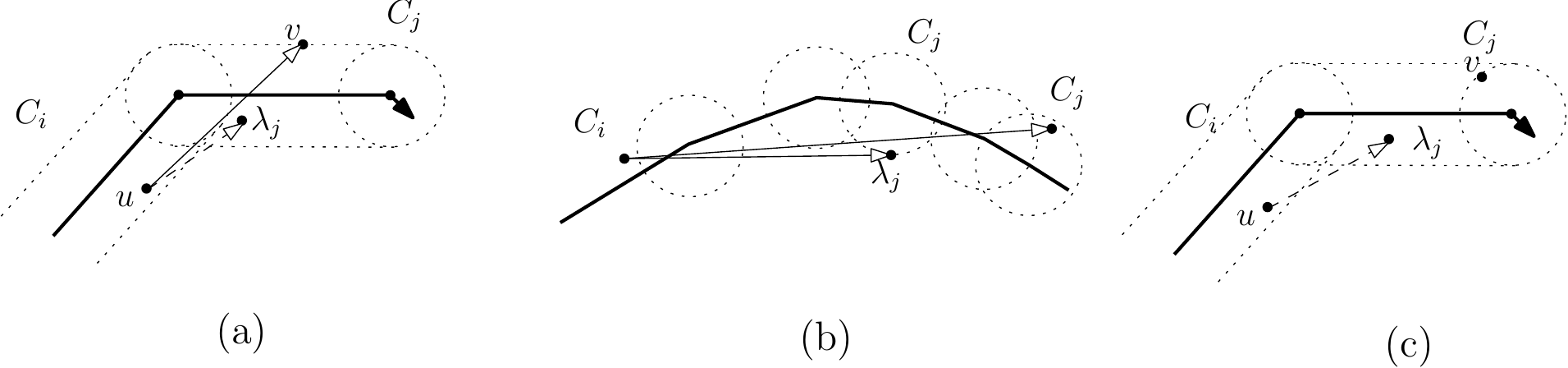}
	\caption{(a) Point $u$ is a $B1$ point at $C_i$ 
(b) Point $u$ is a $B2$ point at $C_i$ (c) Point $u$ is a $B3$ point at $C_i$.}
	\label{fig:types}
\end{figure}

After running the algorithm in \cite{cccg11},
we process the reachability sets $\R_1,\R_2, \dots \R_{n-1}$, one by one in order, 
and we identify the types of the points for every point in each set. 
Notice that a point $u$ may be located in 
multiple cylinders, so it might be reachable 
at more that one cylinder and thus
be in more than one reachability set.

%


%
%

\newcommand{\TwiceB}{Twice-TypeB}
\newcommand{\Upper}{\pi}
\newcommand{\Lower}{\mu}


\newcommand{\Rival}{Rival}

Let $\Upper$ and $\Lower$ be the upper 
and the lower chain of $\Pol$, respectively.  
Let Tube($\Pol$) be the union of all $\CC(\PolSeg_i, \eps)$, 
where $P_i$ is an edge of $\Pol$, $1 \le i \le n$.
Tube($\Upper$) and Tube($\Lower$) are defined, analogously.


\begin{definition}
We call a point a  \TwiceB ~point, 
if it is of $\SemiBad$ at two cylinders.
\end{definition}

\begin{definition}

We call a connected area within Tube($\Pol$),
shared between two cylinders, one corresponding 
to an edge in $\Upper$ and another corresponding to 
an edge in $\Lower$, a \DoubleB~area, 
if it contains a \TwiceB ~point. 
\end{definition}

\begin{lemma}\label{lemma:twice}
Given a polygon $\Pol$, a pointset $\pset$ 
and $\epsilon>0$, at most two \DoubleB~areas  exists.  
\end{lemma}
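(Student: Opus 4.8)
## Proof Proposal for Lemma~\ref{lemma:twice}

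The plan is to argue that a \DoubleB~area forces a very rigid geometric configuration near a ``turn'' of the convex polygon, and that such turns can occur only a bounded number of times along a convex boundary. First I would unpack the definition: a \DoubleB~area lives in the overlap of a cylinder $C_a$ around an edge of the upper chain $\Upper$ and a cylinder $C_b$ around an edge of the lower chain $\Lower$, and it contains a point $w$ that is $\SemiBad$ at two cylinders. By Observation~\ref{obs:orderABC}, being $\SemiBad$ at a cylinder $C_i$ means $w$ sits (in the $\lex$ order along $P_i$) strictly after every $\Good$ point of $C_i$; intuitively $w$ is ``too far ahead'' along the polygon boundary relative to the leftmost entry point of the next reachability set, so the semi-feasible curve that reaches the rival $\lme{j}$ cannot also pass through $w$ while respecting monotonicity of the \Frechet matching.

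The key structural step is to show that the two cylinders witnessing a \TwiceB~point must belong to edges of $\Pol$ that are ``far apart'' in the cyclic order yet geometrically close — i.e., the area lies where the polygon boundary nearly doubles back on itself within distance $2\eps$. Concretely, I would argue: if $w$ is $\SemiBad$ at $C_a$ (an upper-chain cylinder) with rival on the next set, the semi-feasible curve through $\lme{}$ ``overtakes'' $w$, which means the boundary curve $\rho$ revisits the $\eps$-neighborhood of $w$ at a later parameter; since $\rho$ traverses the convex boundary once, two parameter-separated visits to $\CB(w,\eps)$ can only happen if one visit is on $\Upper$ and the other on $\Lower$ (convexity forbids a third near-coincidence of the boundary with a small ball, because a convex curve meets a line — hence a ball of radius $\eps$ in a thin strip — in at most two arcs). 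This pins down that each \DoubleB~area corresponds to a pair (one upper edge, one lower edge) whose $\eps$-cylinders overlap, and the overlap contains a \TwiceB~point. Then I would show these overlap regions, ordered along $\Upper$, are themselves monotone: because $\Upper$ is $x$-monotone and $\Lower$ is $x$-monotone, an upper edge and a lower edge whose cylinders meet must have overlapping $x$-projections near one of the two ``ends'' of the polygon (leftmost or rightmost extreme vertex of $\Pol$, within a $2\eps$ slab). That gives at most two candidate locations for such an overlap, hence at most two \DoubleB~areas.

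The main obstacle I anticipate is making precise the claim that a \TwiceB~point forces the two witnessing cylinders to straddle the upper and lower chains, rather than both being on the same chain. Ruling out ``two upper-chain cylinders'' requires using that along a single monotone chain the reachability sets $\R_1, \dots$ are processed in boundary order and a point cannot be $\SemiBad$ (overtaken by its rival $\lme{j}$) at two distinct cylinders of the same chain without the rival-overtaking already having ``consumed'' it — this should follow from the monotonicity built into the decision algorithm of~\cite{cccg11} together with Observation~\ref{obs:orderABC}, but the bookkeeping of which $\R_i$'s a reused point belongs to needs care. Once that dichotomy is established, the counting argument is short: the ``fold'' regions where an $\Upper$-cylinder meets a $\Lower$-cylinder are confined to $\eps$-neighborhoods of the two $x$-extreme vertices of $\Pol$ (leftmost and rightmost), so at most two connected such areas can contain a \TwiceB~point, and the lemma follows.
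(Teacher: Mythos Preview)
Your proposal has a genuine gap, and it stems from not using the defining property of a \TwiceB\ point in the counting step.

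First, the ``main obstacle'' you flag is a non-issue: by definition a \DoubleB\ area is already the overlap of one $\Upper$-cylinder and one $\Lower$-cylinder, so there is nothing to prove about the two witnessing cylinders straddling the chains.

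Second, your counting argument does not establish the lemma. You argue (heuristically, via concavity of the vertical thickness of a convex polygon) that $\text{Tube}(\Upper)\cap\text{Tube}(\Lower)$ has at most two connected components, near the left and right $x$-extremes. Even granting this, a \DoubleB\ area is the overlap of a \emph{single} upper-edge cylinder with a \emph{single} lower-edge cylinder; near one $x$-extreme several upper edges can each overlap several lower edges, so one ``fold region'' can contain many distinct \DoubleB\ areas. Your argument therefore bounds the wrong quantity. Note also that you never use what it means for $w$ to be $\SemiBad$ --- your reasoning would apply equally to any point lying in such an overlap, \TwiceB\ or not, and in that generality the bound is false.

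The paper's proof proceeds differently. It assumes three \DoubleB\ areas and takes a \TwiceB\ point $w$ in the middle one, with rivals $w'$ (upper) and $w''$ (lower). The crucial step uses the meaning of $\SemiBad$: since $w$ cannot reach its rival $w'=\lme{j}$, the segment $\overline{ww'}$ misses the ball $\CB(p_1,\eps)$ at the polygon vertex $p_1$ separating the two consecutive upper cylinders; this forces all polygon vertices to the right of $p_1$ to lie below $p_1$. The symmetric argument on the lower chain forces $p_2$ to be the lowest vertex of $\Pol$. Together these pin the $y$-extremes of $\Pol$ at the middle area, contradicting the existence of a third \DoubleB\ area to its right. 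So the $\SemiBad$ property is what drives the bound; a purely ``where can cylinders overlap'' argument is not enough.
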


\begin{proof}

Assume w.l.o.g. that $\Pol$ is stretched horizontally (see Figure \ref{fig:twice})(In the case that $\Pol$ is vertically stretched, decompose 
it into right and left chains and the rest of 
argument is the same as here).

Assume for the sake of contradiction,
that there are three such areas and 
a point $w$ is a \TwiceB ~point located in the middle one.
(see Figure \ref{fig:twice}). 
Let $w'$ and $w''$ be the rivals of $w$ in the upper 
and lower chain of $\Pol$, respectively. 
Assume w.l.o.g. that $w$ and $w'$
are located in two consecutive cylinders (within Tube($\Upper$)) which 
share $\CB(p_1,\eps)$. 
Similarly, assume w.l.o.g. that $w$ and $w''$
are located in two consecutive cylinders (within Tube($\Lower$)) which 
share $\CB(p_2,\eps)$.
Since $w$ is a $\SemiBad$ point and $w'$ is the rival of $w$, 
edge $\Dir{ww'}$ does not cross circle $\CB(p_1,\eps)$. 
Because of the same reason, edge $\Dir{ww''}$ does not cross circle $\CB(p_2,\eps)$. 
This implies that, vertices of $\Pol$ to the right of $p_1$ has 
a $y$-coordinate less than $p_1$ and vertices before $p_2$  
has a $y$-coordinate greater than $p_2$, meaning that $p_2$
has the lowest $y$ coordinate among vertices of $\Pol$.
Therefore, no \DoubleB~area exists to the right of the 
area in which $w$ is located, a contradiction.

%

\REM{
To prove the lemma, we first investigate a situation where 
a point is in more than one reachability sets and then from there, we establish
the lemma.

If $u \in \pset$ is in $\R_{i}$ and $\R_{i+1}$, 
then it is $\Good$ point at $C_i$.
Let $p$ be a vertex of the polygon and $\CB(p,\eps)$ be 
the circle shared between $C_i$ and $C_{i+1}$.

One of these two cases may happen here: (i) point $u$ lies in
$\CB(p,\eps)$ (see Figure \ref{fig:twice}a), 
then obviously $u$ is a $\Good$ point at $C_i$
(ii)  point $u$ doesn't lie in
$\CB(p,\eps)$ (see Figure \ref{fig:twice}b), we show 
that although $\Dir{u\lambda_{i+1}}$ could not 
cross $\CB(p,\eps)$, 
but still $u$ is a $\Good$ point at $C_i$:

Assume for the sake of contradiction, that 
$u$ is $\SemiBad$ point at $C_i$.
Thus, by the definition of $\SemiBad$ points, 
no semi-feasible 
curve exists which contains $u$ as its vertex
and ends at $\lme{i+1}$.
Therefore, since $u \in \R_{i+1}$, 
it must be located entirely after 
$\lme{i+1}$ in $C_{j+1}$
(for example, in Figure \ref{fig:twice}b, $u$ 
can be located at the place where $u'$ is). This contradicts 
the fact that $u$ is in $S_i$(or located within $C_i$).
}
\end{proof}

\begin{figure}[t]
	\centering
	\includegraphics[width=1\columnwidth]{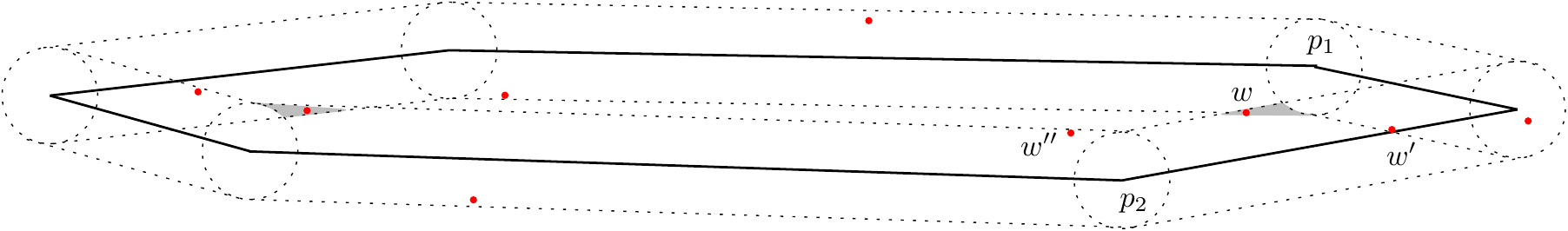}
	\caption{Proof of Lemma \ref{lemma:twice}}
	\label{fig:twice}
\end{figure}


\begin{lemma} \label{lemma:last}
There exists a feasible curve iff Algorithm \ref{alg:dec} returns
YES.
\end{lemma}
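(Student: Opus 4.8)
The plan is to prove the two directions of the equivalence separately, using the structural lemmas established for the convex polygon case together with the correctness guarantee of the decision algorithm of~\cite{cccg11}.

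For the easy direction, suppose Algorithm~\ref{alg:dec} returns YES. By construction the algorithm runs the decision procedure of~\cite{cccg11} on the boundary curve $\rho = \bounC{\Ps}$, which starts and ends at the fixed point $\Ps$ on the boundary of $\Pol$ with $\|\Qs\Ps\| \le \eps$, and then post-processes the reachability sets. A YES answer means $\R_{n+1} \cap \CB(\Ps,\eps) \neq \emptyset$ and, crucially, that the post-processing involving the $\SemiBad$ points and \DoubleB~areas found no obstruction. I would first extract from the reachable point in $\R_{n+1}$ a semi-feasible curve $Q$ ending near $\Ps$ with $\distF(Q,\rho)\le\eps$; this is exactly what reachability in~\cite{cccg11} yields. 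Then I would argue that every point of $\pset$ is actually a vertex of $Q$: the points that could be missed are precisely the $\SemiBad$ points, and Lemma~\ref{lemma:twice} bounds the number of \DoubleB~areas, so only a bounded set of ``problematic'' configurations can occur, each of which Algorithm~\ref{alg:dec} explicitly checks and repairs (by detouring $Q$ through the rival point, using the midpoint-waiting trick as in Lemma~\ref{lemma:ABCanSeeC} for the NP-hardness gadget, adapted here to the polygon setting). Since $\Qs$ is the leftmost point of $\pset$ and lies within $\eps$ of $\Ps$, starting the boundary traversal at $\Ps$ is consistent with Lemma~\ref{lema:convexhull}, so $Q \in Curves(\pset)$ and $\distF(Q,\rho)\le\eps$, i.e. $Q$ is feasible.

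For the converse, suppose a feasible curve $Q$ exists, so $Q \in Curves(\pset)$ and $\distF(\bounC{z},Q) \le \eps$ for some point $z$ on the boundary of $\Pol$. The key normalization step is to show we may assume $z = \Ps$, or at least that the starting point can be taken near $\Qs$: by Lemma~\ref{lema:convexhull}, feasibility forces $\distF(\sigma(\CH(\pset),\hat z),\bounC{z'})\le\eps$ for a hull vertex $\hat z$ and a nearby boundary point $z'$; since $\Qs$ is an extreme (leftmost) point of $\pset$ it is a vertex of $\CH(\pset)$, and one can cyclically rotate the parametrization of $Q$ so that it begins at a point within $\eps$ of $\Ps$ without increasing the \Frechet distance to an appropriately re-based boundary curve. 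Having fixed the base point, $Q$ restricted and reparametrized becomes a semi-feasible curve with respect to $\rho$, so every vertex of $Q$ — in particular every point of $\pset$ — is reachable at some cylinder $C_i$, meaning the reachability sets computed by~\cite{cccg11} are nonempty and $\R_{n+1}\cap\CB(\Ps,\eps)\neq\emptyset$. It remains to check that the post-processing phase does not spuriously reject: if $Q$ visits every point of $\pset$ while staying within $\eps$ of $\rho$, then each $\SemiBad$ point is in fact visited by $Q$ via a detour, which by Observation~\ref{obs:orderABC} and Lemma~\ref{lemma:twice} can only happen inside one of the at most two \DoubleB~areas, exactly the cases Algorithm~\ref{alg:dec} handles; hence the algorithm returns YES.

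The main obstacle, I expect, is the base-point normalization in the $(\Rightarrow)$-style argument of the converse: a feasible $Q$ comes with \emph{some} boundary curve $\bounC{z}$, but the algorithm commits to the specific $\bounC{\Ps}$, and I must show this loss of generality is harmless. The delicate point is that rotating the cyclic parametrization of a closed boundary curve interacts with the monotonicity requirement in the \Frechet matching — one has to re-cut $Q$ at a point matched to $\Ps$ and verify the concatenated pieces still match monotonically, which is where Observation~\ref{obs:concat} and the convexity of $\Pol$ (guaranteeing $\Qs$ is ``seen'' by a unique boundary region) do the real work. A secondary technical point is arguing that the $\SemiBad$/\DoubleB~case analysis is exhaustive, i.e. that Lemma~\ref{lemma:twice} together with Observation~\ref{obs:orderABC} really captures every way a feasible curve can be forced off the greedy leftmost-entry path of~\cite{cccg11}; I would handle this by a careful case split on $B1$, $B2$, $B3$ points and which chain ($\Upper$ or $\Lower$) the rival lies in.
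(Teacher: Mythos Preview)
Your proposal misreads what Algorithm~\ref{alg:dec} does, and as a result the burden in the $(\Leftarrow)$ direction is inverted. The algorithm does not simply run the procedure of~\cite{cccg11} and then ``post-process with checks''; after the necessary-condition tests it \emph{explicitly constructs} a curve $\alpha$ by appending, at each cylinder $C_i$, every point of $\pset_i$ in $\lei$ order, backtracking (line~\ref{l:rightmost}) whenever $end(\alpha)$ cannot reach $\lme{i}$, and finally returns YES iff $\alpha$ (or one of the modified curves $\beta,\gamma$) contains every point of $\pset$. Hence when the algorithm says YES, the fact that all points are visited is \emph{given by the algorithm's own test}; what must be \emph{proved} is that $\distF(\alpha,\rho)\le\eps$ despite $\alpha$ greedily visiting many points per cylinder and sometimes deleting vertices. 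The paper does this by induction over the cylinders, the crux being that whenever $end(\alpha)$ fails to reach $\lme{i}$, convexity of $\Pol$ guarantees that some earlier surviving vertex $u$ of $\alpha$ does reach $\lme{i}$ (with two sub-cases: $u\in\pset_{i-1}$, or $u\in\R_k$ for some $k<i-1$). Your plan --- extract a semi-feasible curve from a reachable point in $\R_{n+1}$ via~\cite{cccg11} and then argue it visits all of $\pset$ --- neither matches the algorithm nor works on its own: the semi-feasible curve guaranteed by~\cite{cccg11} need not visit every point of $\pset$.

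Two smaller points. First, your invocation of Lemma~\ref{lemma:ABCanSeeC} and the ``midpoint-waiting trick'' is a confusion: that lemma belongs to the NP-hardness gadget of Section~\ref{sec:NPComp} and plays no role in the convex-polygon case. Second, for the $(\Rightarrow)$ direction you put the main weight on base-point normalisation, but that is handled upstream by Lemma~\ref{lema:convexhull} and the choice of $\Qs$; the substantive content of the paper's argument here is that any point of $\pset$ which is $\SemiBad$ at a cylinder of one chain is $\Good$ at the other chain (unless it is a \TwiceB\ point, confined to at most two \DoubleB\ areas by Lemma~\ref{lemma:twice}), and is therefore picked up by $\alpha$ there, while $\beta$ and $\gamma$ cover the \TwiceB\ points. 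The $B1,B2,B3$ split you propose is only tangentially relevant: the $B3$ test at line~\ref{l:B3} is a necessary condition used to reject early, not part of establishing that $\alpha,\beta,\gamma$ succeed.
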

\begin{proof}
First  $(\Leftarrow)$:
It is easy to observe that 
conditions in lines  
\ref{l:everyReach},
\ref{l:last},
\ref{l:B3} are necessary
to have a feasible curve 
through $\pset$.
It suffices to show for the curve 
$\alpha$ built by Algorithm \ref{alg:dec}, $\distF(\alpha, \rho) \le \eps $.

We show this by induction on the 
number of edges in $\rho$.
To handle the base case of the induction, assume that 
$\rho$ has an additional edge consist 
of only point $x'$.
Imagine a point object $\CO_\Pol$
walking on the boundary of $\Pol$, starting from 
point $x'$ and imagine another point object $\CO_\alpha$
walking on curve $\alpha$ starting from $x$ while 
keeping distance $\le 1$ to  $\CO_\Pol$.
Since distance $\CO_\Pol$ and $\CO_\alpha$ is less 
than $\eps$ at the start,
the base case of the induction holds. 

We show $\CO_\Pol$ can walk the whole $\rho$ 
and ends at $x'$.
Assume inductively that in the loop in line \ref{l:forCyl},
we have processed $\R_1$ to  $\R_{n-1}$, 
and now we are about to process $\R_{n}$.
So every point in $\pset_{i-1}$ is in 
$\alpha$ and $\CO_\Pol$ can walk to a point 
in $\rho_{i-1}$ in $\eps$ distance to $\CO_\alpha$.
Let $v$ be the leftmost point at $C_i$  such that 
$end(\alpha) \see v$.
If $v = \lme{i}$, by Observation \ref{obs:concat}, 
we can add  every point in $\pset_i$ to $\alpha$ 
and $\CO_\Pol$ can proceed to $\rho_i$. Otherwise;
	since $\lme{i}$ is a reachable point, a point 
	$u$ must exist in $\R_k$, $k<i$ such that 
	$u$ reach $\lme{i}$. Assume that 
	$u$ is after all points in direction $\Dir{\rho}$
	which can reach to $\lme{i}$.
	It suffices to show point $u$ is a vertex of $\alpha$, 
	so that in line \ref{l:rightmost}, by reaching to $u$, 
	the algorithm stops removing vertices in $\alpha$
	and connects $u$ to $\lme{i}$.
	Two cases happen here: 
     (i) $u \in \pset_{i-1}$, and  some points in $\pset_{i-1}$ are 
	$\Good$ and some are $\SemiBad$ points.
    Then, $\SemiBad$ points are removed 
	from $\alpha$ (because they can not reach $\lme{i}$ by definition), 
	and $u$ will be connected to	$\lambda_i$ (see Figure \ref{fig:skipped}).
	Therefore, $\CO_\Pol$ can walk to the next edge on $\rho$. 
     (ii) When $u \in \R_k$, $k < i-1$, then observe that because of 
the convexity of the polygon, $u$ reach
points in $\R_j$ which can not reach $\lme{i}$, $k<j<i$ (see Figure \ref{fig:skipped}). 
Therefore, $u$ is a vertex in $\alpha$. 

Now we show that curve $\alpha$ can be modified such 
that it visits every \TwiceB ~point when $\CO_\Pol$ walks on $\Upper$
(with the same argument, curve $\alpha$ can be modified to 
visit all such points when $\CO_\Pol$ walks on $\Lower$).
Let's call \DoubleB~areas, area I and area II.
Let $b$ be the first $\SemiBad$ point in 
direction $\Dir{\Upper}$ at area I. 
Assume w.l.o.g. that $b$ is in $\pset_{i}$.
Let $u$ be a point after all points in direction $\Dir{\Upper}$
which can reach $\lme{i+1}$.
Because of the convexity of polygon,
point $u$ reaches $b$. 
It is clear that as soon as $\alpha$  reaches $b$, 
it can visit all $\SemiBad$ points 
at $C_i$ in sorted $\lei$ order. 
Let $d$ be the leftmost point 
in $\pset_{i+1}$ which the last 
$\SemiBad$  point at $C_i$ can reach. 
If $d$ is a vertex of $\alpha$, then
$\alpha$ now has all  \TwiceB ~points at area I.
When $d$ is not a vertex of $\alpha$,
because of the convexity of the polygon, 
still $d$ can be connected to a 
vertex of $\alpha$ which is reachable from $d$. 
Therefore, one can modify curve $\alpha$
to visit \TwiceB ~points.


$(\Rightarrow)$:
Let $Q$ be a feasible curve through $S$.
Assume that there is  no \TwiceB ~point
among points in $\pset$. Let $z$ be any point in $\pset$.
Then, by Lemma \ref{lemma:twice}, if 
$z$ lies in both $Tube(\Upper)$ and
$Tube(\Lower)$, then it can be $\SemiBad$ only 
with respect to one of $\Upper$ or $\Lower$.
Assume w.l.o.g, that $z$ is a $\SemiBad$ point in the 
upper chain. Therefore, $z$ is a $\Good$ point in the 
lower chain and  it will be visited by 
$\alpha$. Now, if  there are some \TwiceB ~points in $S$,
it is easy to show that  $z$ will be visited  by one of 
$\beta$ or $\gamma$ curves in our algorithm.

\REM{
\ref{} $\CO_\alpha$  and $\CO_P$
walk along their path, keeping distance $\eps$
to each other, until 
$\CO_C$  reaches to vertex $v$ in $\CH(S)$
and $\CO_P$ reaches to point $v' \in e_i$ on the boundary of $\Pol$
where $\|vv'\| \le \eps$.
}

\REM{
we can construct a feasible curve:
Lemma \ref{lemma:twice} ensures if a point 
which is $\SemiBad$ in the upper chain, 
can not be  $\SemiBad$  again in the lower chain
and vice versa, a point 
which is $\SemiBad$ in the lower chain, 
can not be  $\SemiBad$  again in the upper chain;
unless it is located within $\DoubleB$ area.

Therefore, we start from point $z$, process 
cylinders one by one and visit all $\Good$ 
points in each cylinder. If we encounter a $\SemiBad$ point, 
we ignore it, because that point will become $\Good$ point
later and we can visit it. 
Let $area I$ and $area II$ denote 
\DoubleB areas.
Now, to take care of \TwiceB ~points, 
build these two curves $Q_1, Q_2$:
\begin{itemize}
\item $Q_1$: when walking on $\pi$, ignore rival of $\SemiBad$ in $area I$
and $area II$ and visit those $\SemiBad$ points. 
\item $Q_2$: when walking on $\mu$, ignore rival of $\SemiBad$ in $area I$
and $area II$ and visit those $\SemiBad$ points.
\end{itemize}

Return YES, if any of $Q_1$ or $Q_2$ visits every point in $\pset$;
otherwise, return NO. 
}
\end{proof}

\begin{figure}[t]
\centering
	\includegraphics[width=0.8 \textwidth]{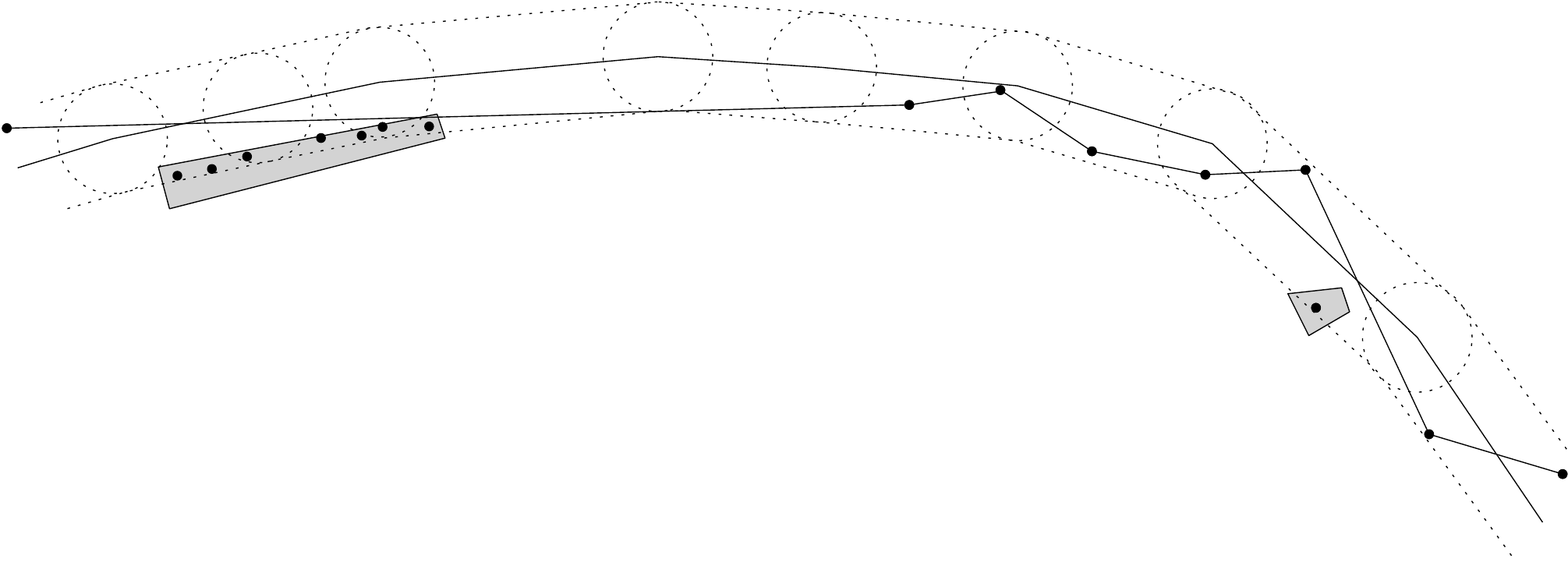}	
	\caption{Proof of Lemma \ref{lemma:last}, points in the 
shaded area can not reach the leftmost entry point of
the next non-empty cylinder }	
	\label{fig:skipped}
\end{figure}

\begin{algorithm} [h]
\caption {{\sc Decision Algorithm }} 
\label{alg:dec}
\algsetup{indent=1.5em}
\begin{algorithmic}[1]	
		\baselineskip=0.9\baselineskip

	\vspace{0.3em}
	\REQUIRE  A convex polygonal $\Pol$, pointset $\pset$ and distance $\eps$

\STATE let $x$ be the point with 
smallest $x$-coordinate in $\pset$
\STATE let $x'$ be a point on the boundary of $\Pol$
s.t.   $\|xx'\| \le \eps$

\STATE let $\rho = \bounC{x'}$ (curve $\rho$ has $n+1$ line segments) 

\vspace{0.1 in}
	\STATE run the decision algorithm in \cite{cccg11}, 
	with parameters $\rho, \pset$ and $\eps$
	\STATE {\bf return} {\sc no} if a point in $\pset$ 
is not in any $\R_{i}$, $1\le i \le n+1$  \label{l:everyReach}

	\STATE {\bf return} {\sc no} if $\R_{n+1} \cap \CB(x',\eps) = \emptyset$ \label{l:last}
	\STATE process reachability sets from $\R_1$ to $\R_n$ in order, \\to identify types of points in each $\R_i$ \label{l:identifytypes}
	\STATE {\bf return} {\sc no} if a point exists of type  $B3$ \\ in every cylinder at which that point is reachable  \label{l:B3}

\label{l:typeB}

\vspace{0.2 in}
	\STATE  $\alpha \eq \emptyset$ \label{l:init1}
	\STATE  $\alpha \eq \alpha \ap x$ \label{l:init1}
 	\FOR {$i = 1$ to $n$ } \label{l:forCyl}	
			\STATE let $v$ be a leftmost point at $C_i$  s.t.  $end(\alpha) \see v$  \label{l:leftmost}	
	\IF { $v \neq \lme{i}$}	 \label{l:con}	
			\STATE remove $end(\alpha)$ until a vertex $u$ of $\alpha$ is found s.t. $u \see 		\lme{i}$ \label{l:rightmost} \\
	\ENDIF
	\STATE 
$\alpha \eq \alpha \ap$ 
(every point in $\pset_i$ in sorted $\lei$ order)


	\ENDFOR

	\IF {a \TwiceB ~point exists}
	\STATE $\beta \eq$ modify $\alpha$ s.t. it visits those points in $\Upper$ \label{l:lasttwo1}
	\STATE $\gamma \eq$ modify $\alpha$ s.t. it visits those points in $\Lower$
\label{l:lasttwo2}
	\ENDIF
	\STATE {\bf return} {\sc YES} if $\alpha$ or $\beta$ or $\gamma$ include all points in $\pset$

\end{algorithmic}
\end{algorithm}

\REM{
\hspace{-0.2 in}{\bf Time Complexity}

\begin{lemma}\label{lema:time}
Given a convex polygon $\Pol$ of size $n$, a pointset $\pset$ of size $k$ 
and a distance $\eps$,
Algorithm \ref{alg:Dec} decides in $O(nk^2)$ time if a curve 
$Q$ exists in $\eps$-\Frechet distance to boundary curve $P$, 
visiting every point in $\pset$, 
when $\CO_P$ is allowed to cycle the polygon two times.
\end{lemma}

\begin{proof}

Line \ref{l:cccg11} of Algorithm \ref{alg:Dec} takes $O(nk^2)$ time to execute by \cite{cccg11}. We  show that identifying 
$\Good$, $\SemiBad$ and $\Bad$ points at each 
cylinder, take the same total time. 

Consider  two consecutive non-empty 
reachability sets  $\R_i$ and $\R_j$, $i<j$.
Let $u$ be a point in $\R_i$ 
and $v$ be a point in $\R_j$.
Note that $u$ reaches $v$ in three possible ways:
(i) directly through edge $\Dir{uv}$, 
(ii) through one point: $\Dir{uw}+\Dir{wv}$, where
$w \in \R_i$ and $w \lex u $
or
$\Dir{uw'}+\Dir{w'v}$, where
$w' \in \R_j$ and $v \lex w' $
%
(iii) through two points: 
$\Dir{uw}+\Dir{ww'} + \Dir{w'v}$,
where $w \in \R_i$, $w' \in \R_j$, $w \lex u $
and $v \lex w' $.


It's easy to find out if a point is of $\Bad$ at $C_i$.
To find out the types of other points at $C_i$, 
we proceed as follows:
we first sort points of $\R_i$
 in $\lei$ order and 
also sort points of  $\R_j$
 in $\lei$ order. 
For each point $u \in \R_i$
we find set of points in $\R_j$ 
to which $u$ directly reaches.
Among the points in $\R_i$, we find 
the last one, say $v$, 
that can reach  $\lme{j}$ directly.
Any point before $v$
and  points after $v$ (not entirely after) are of $\Good$ at $C_i$.

Let $A$ be the set of point of $\R_i$ located entirely after $v$.  Assume $p$ is the last point in $A$
which can reach  some point $w$, which is 
located entirely after  $\lme{j}$ in $C_j$.
For a point $q \in A$, 
if $q \lei p$ or $q \lex p$ or $p \lex q$,
then $q$ is of $\SemiBad$ at $C_i$.

 \qed

\end{proof}

}

\begin{theorem} \label{thm:main}
	Given a convex polygon $\Pol$ with $n$ edges and a set $S$ of $k$ points, 
	we can decide in $O(nk^2)$ time whether there is feasible curve $Q$ through $S$ 
	for a given $\eps \gee 0$. Furthermore, a feasible curve $Q$ which minimizing the 
\Frechet distance to the boundary curve of $\Pol$, can be computed in $O(nk^2 \log (nk))$ time.

\end{theorem}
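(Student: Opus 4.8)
The plan is to combine the decision procedure of Algorithm~\ref{alg:dec} with a search over the critical values of $\eps$. For the decision part, correctness is Lemma~\ref{lemma:last}, so only the running time needs to be checked. The call to the procedure of~\cite{cccg11} costs $O(nk^2)$ and outputs the reachability sets $\R_1,\dots,\R_{n+1}$, which have total size $O(nk)$ because each of the $k$ points lies in $O(n)$ cylinders. To label every point of every $\R_i$ as \Good{} or \SemiBad{} (and in the latter case as $B1$, $B2$ or $B3$), I would, for each consecutive pair of nonempty sets $\R_i,\R_j$, sort both in $\lei$-order and use the reachability relations maintained by~\cite{cccg11} — a direct edge $\Dir{uv}$, a path $\Dir{uw}\oplus\Dir{wv}$ through one intermediate point, and a path $\Dir{uw}\oplus\Dir{ww'}\oplus\Dir{w'v}$ through two — to compute in $O(k^2)$ time the last point of $\R_i$ that reaches $\lme{j}$ and the last that reaches a point lying entirely after $\lme{j}$; this fixes the type of every point of $\R_i$, and identifying \TwiceB{} points and \DoubleB{} areas is then immediate from Lemma~\ref{lemma:twice}. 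Over all $O(n)$ cylinders this is $O(nk^2)$. The main loop of Algorithm~\ref{alg:dec} builds $\alpha$ by inserting each point of $\pset_i$ once and deleting each vertex at most once, so amortized it is $O(nk)$; constructing $\beta$, $\gamma$ and testing whether $\alpha$, $\beta$ or $\gamma$ contains all of $\pset$ is $O(nk)$. Hence the decision problem is solved in $O(nk^2)$ time.

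For the optimization part, I would argue that the Boolean answer of Algorithm~\ref{alg:dec}, as a function of $\eps$, is monotone — a feasible curve for $\eps$ is feasible for every larger value, since feasibility is defined with only non-strict inequalities — and is piecewise constant, changing only at a set $\mathcal E$ of $O(nk^2)$ \emph{critical values}, each an elementary algebraic expression in the input coordinates: (i) for every point $p\in\pset$ and every vertex or edge of $\Pol$, the value of $\eps$ at which $\CB(p,\eps)$ starts or stops meeting that feature — this governs the sets $\pset_i$ and the segments $\PolSeg_i[v]$; (ii) for every polygon vertex $p$ and every ordered pair $u,v\in\pset$, the value $dist(p,\Seg{uv})$ at which $\Dir{uv}$ starts or stops crossing $\CB(p,\eps)$ — this governs all reachability relations, hence the sets $\R_i$, the leftmost entry points $\lme{i}$, the \Good{}/\SemiBad{}/$Bi$ classification, and the \DoubleB{} areas; and (iii) the $O(nk)$ values coming from $dist(x,\partial\Pol)$ and from the necessary condition of Lemma~\ref{lema:convexhull}, which fix the legitimacy and choice of the cut point $x'$ on $\partial\Pol$. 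Between two consecutive elements of $\mathcal E$ the entire combinatorial execution of Algorithm~\ref{alg:dec} is invariant, so the minimum feasible distance $\eps^\ast$ is attained at the smallest element of $\mathcal E$ for which the algorithm answers YES.

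The optimization algorithm is therefore: compute $\mathcal E$ in $O(nk^2)$ time, sort it in $O(nk^2\log(nk))$ time, and binary-search for $\eps^\ast$ with $O(\log(nk))$ calls to the $O(nk^2)$-time decision algorithm, for a total of $O(nk^2\log(nk))$; one final run of Algorithm~\ref{alg:dec} at $\eps=\eps^\ast$ returns the optimal curve $Q$. I expect the main obstacle to be proving that the list $\mathcal E$ is \emph{complete} — in particular, that the output of the \cite{cccg11} subroutine, the leftmost entry points $\lme{i}$, the ``rival'' relationships, and the \DoubleB{}-area structure are all constant between consecutive values of $\mathcal E$, and that letting the cut point $x'$ vary continuously with $\eps$ (subject to $\|xx'\|\le\eps$) introduces no breakpoints beyond those in~(i) and~(iii). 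Verifying this reduces to checking that every comparison performed anywhere in Algorithm~\ref{alg:dec}, and in the subroutine it invokes, is between two quantities whose order flips only at some value in $\mathcal E$.
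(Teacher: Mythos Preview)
The paper states Theorem~\ref{thm:main} without proof: immediately after the statement the text moves to the Conclusions section, and the only trace of an argument is a commented-out time-complexity lemma that was excised from the final version. That excised material analyses the decision algorithm exactly as you do---it invokes the $O(nk^2)$ bound from~\cite{cccg11}, then argues that typing the points of each $\R_i$ as \Good{} or \SemiBad{} (via the same one- and two-intermediate-point reachability cases you list) costs $O(k^2)$ per pair of consecutive nonempty reachability sets, hence $O(nk^2)$ overall. So for the decision half your proposal is essentially identical to what the authors had in mind.

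For the optimization half the paper says nothing at all; the $O(nk^2\log(nk))$ bound is simply asserted. Your critical-values-plus-binary-search argument is the standard way to obtain such a bound and is the natural reading of the extra $\log(nk)$ factor. Your enumeration of critical values is reasonable, and you are right to flag completeness of the list $\mathcal E$ as the only nontrivial step; in particular, the events governing $\lme{i}$ are order changes between $\Left(\PolSeg_i[u])$ and $\Left(\PolSeg_i[v])$ for $u,v\in\pset_i$, which are again $O(nk^2)$ algebraic values and should be folded into your category~(i). With that addition your argument is sound and matches, in more detail, what the paper leaves implicit.
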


\section{Conclusions}
\label{sec:conc}
In this paper, we investigate the problem of 
 deciding whether a polygonal curve through a given point set
$\pset$ exists which is in $\eps$-\Frechet distance 
to a given curve $P$. We showed that this problem
is NP-Complete. Also, we presented a polynomial 
time algorithm for the special case of the 
problem, when a given curve is a convex polygon. 

Several open problems arise from our work.
From our first result, 
one could investigate some heuristic methods or 
approximation algorithms.
From the second part, in particular, it is interesting 
to study the  problem in the case 
where the input is a monotone polygon, a simple polygon
or a special type of curve.

\nocite{Simpson}

\clearpage
\appendix

\begin{figure}[t]
	\centering
	\includegraphics[width=0.9\columnwidth]{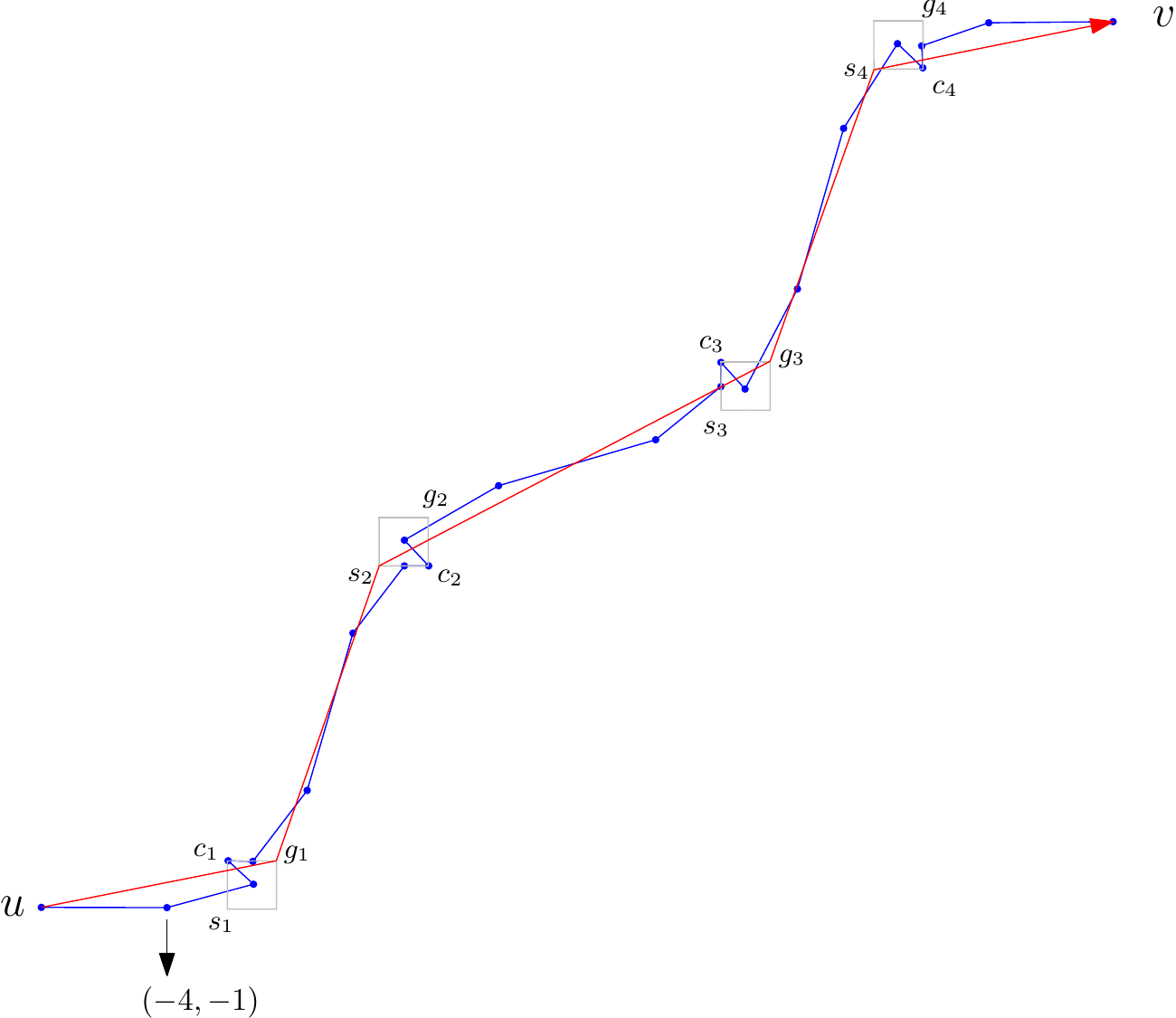}
	\caption{Red curve is curve $B$. Blue curve is an example of curve $\cfev_i$ which corresponds to variable $x_i$ in 
	 formula $\phi$ with four clauses $C_1, C_2, C_3$ and $C_4$. The occurrence of variable $x_i$  in the clauses is: 
$\neg x_i \in C_1$, $ \neg x_i \in C_2$, $x_i \in C_3$ and $x_i \in C_4$.
}
	\label{fig:pathBExample}
\end{figure}

\begin{figure}[h]
	\centering
	\includegraphics[width=0.9\columnwidth]{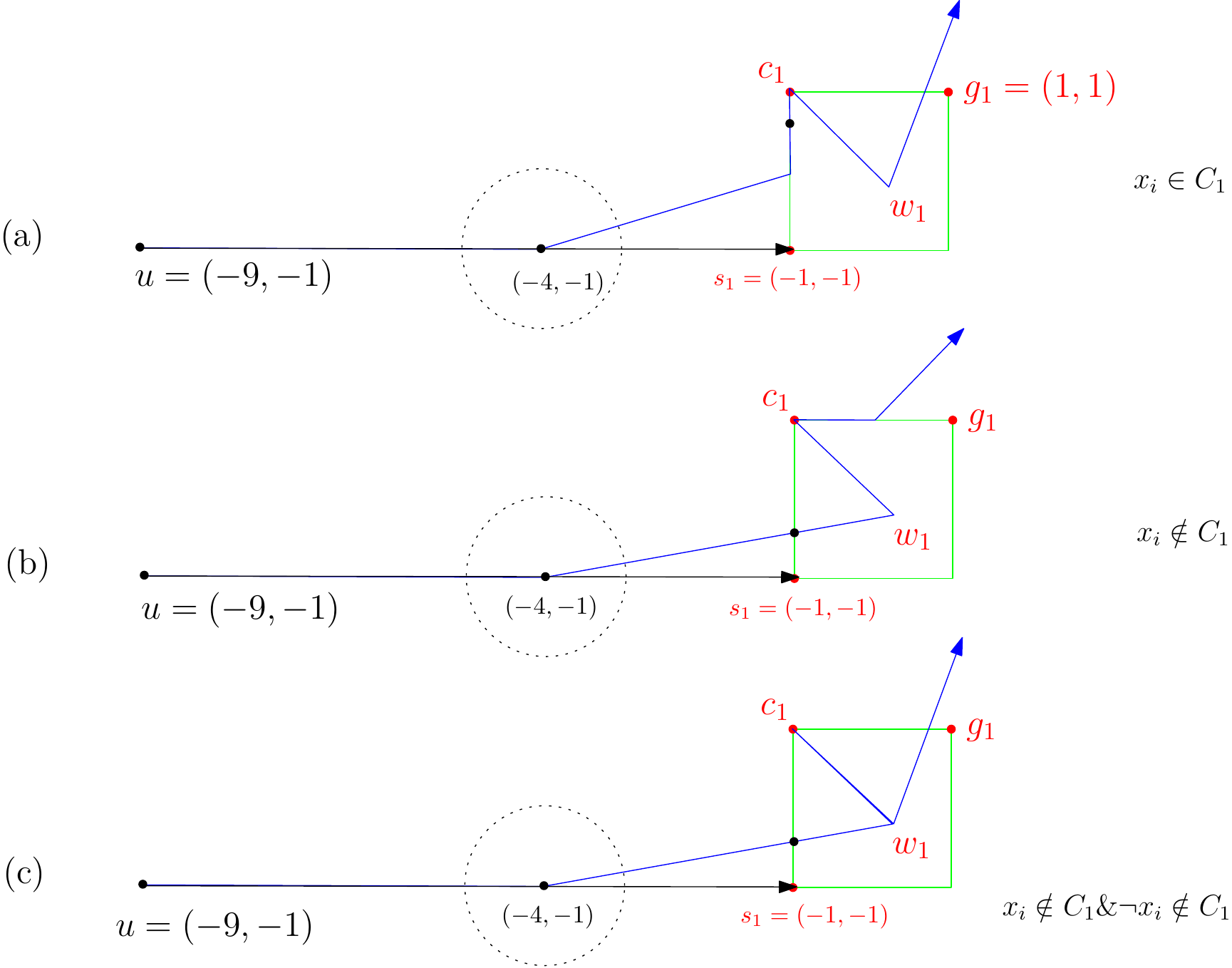}
	\caption{Base case of induction in the proof of Lemma \ref{lemma:PathA}}
	\label{fig:PathAClause1}
\end{figure}

\begin{figure}
	\centering
	\includegraphics[width=0.9\columnwidth]{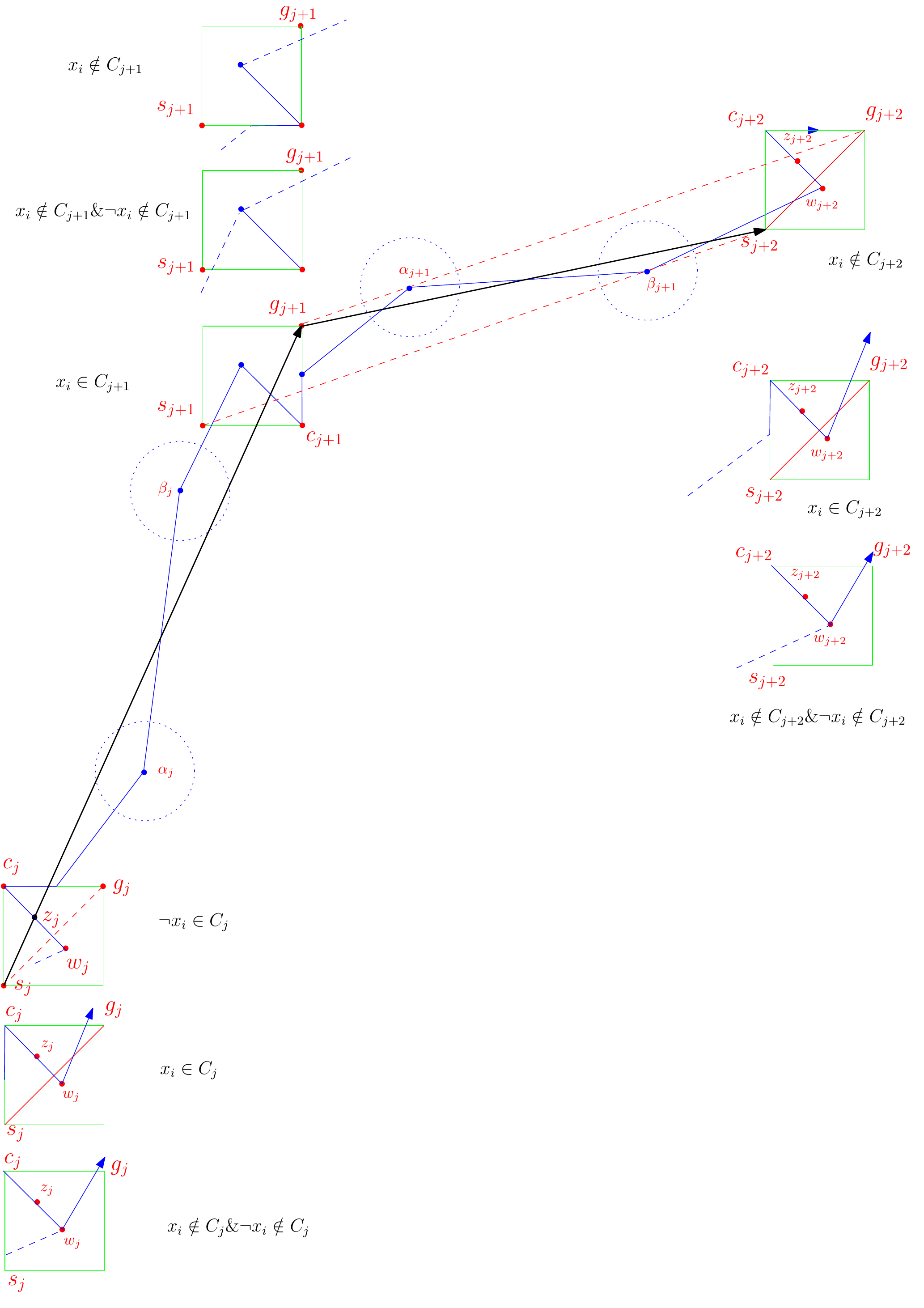}
	\caption{Proof of Lemma \ref{lemma:PathA}}
	\label{fig:PathA}
\end{figure}

\begin{figure}[h]
	\centering
	\includegraphics[width=0.9\columnwidth]{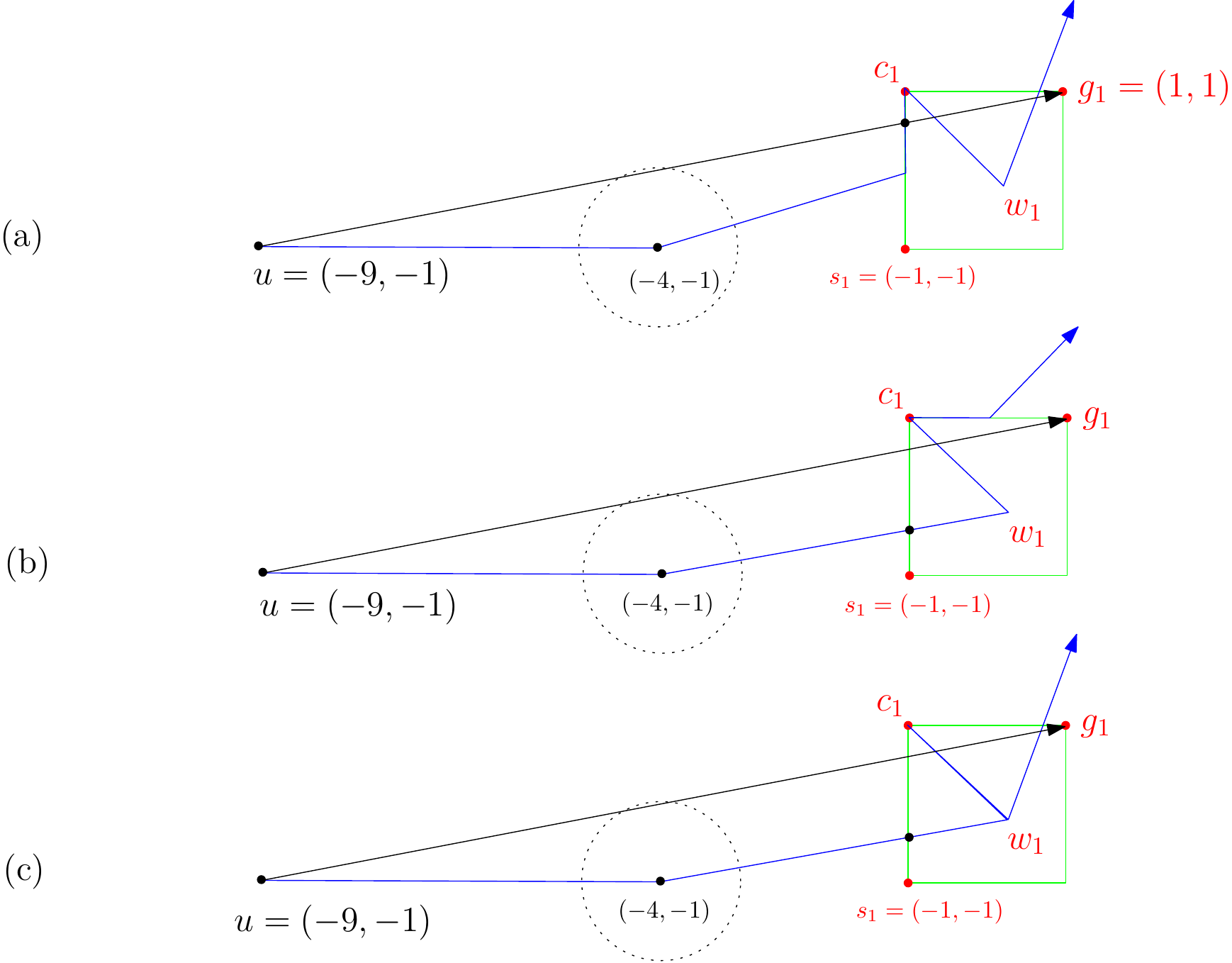}
	\caption{Base case of induction in the proof of Lemma \ref{lemma:PathB} }
	\label{fig:PathBBaseCase}
\end{figure}

\begin{figure}

	\centering
	\includegraphics[width=0.9\columnwidth]{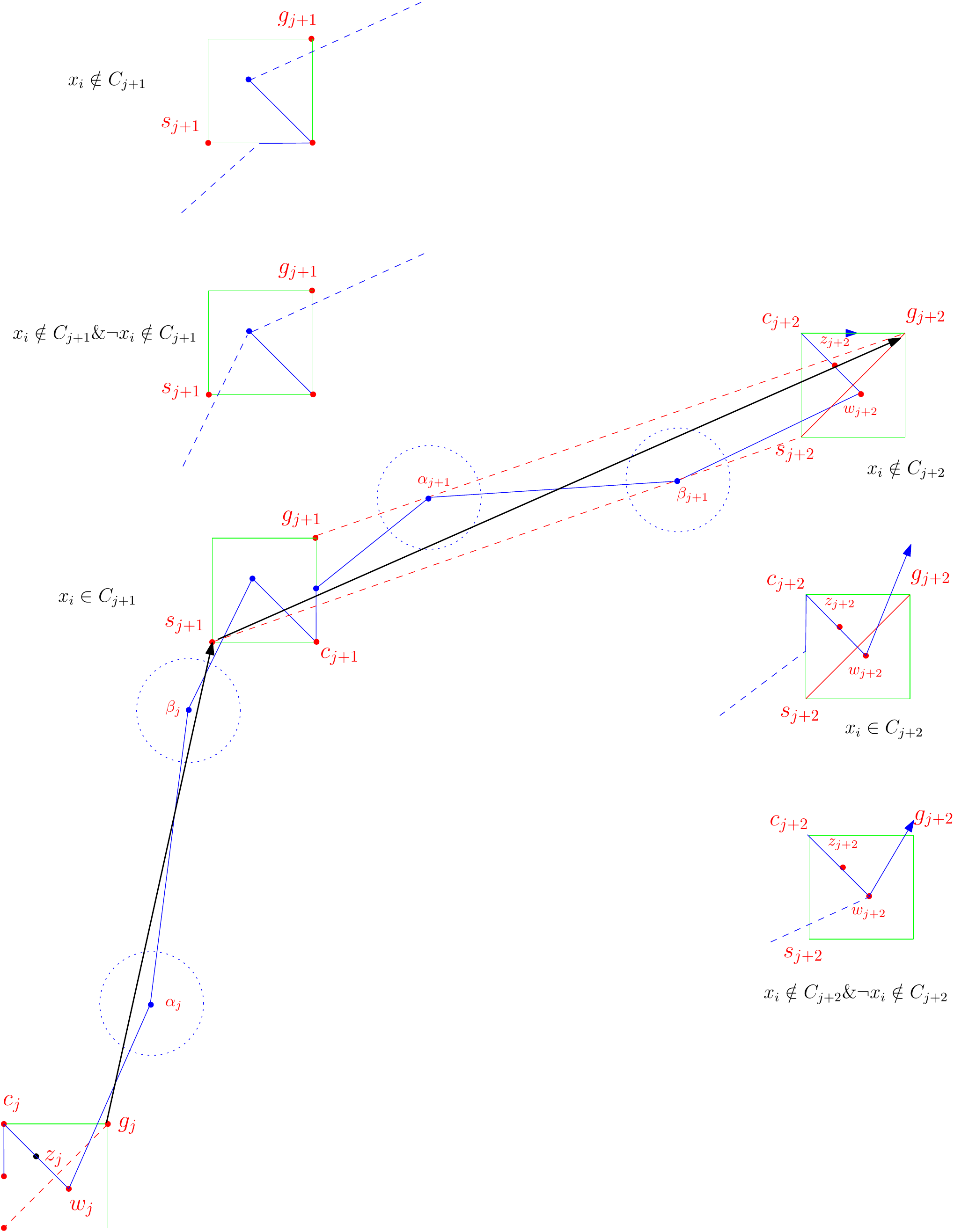}
	\caption{Proof of Lemma \ref{lemma:PathB}}
	\label{fig:PathB}
\end{figure}

\noindent {\bf  Proof of Lemma \ref{lemma:PathB}}:
\begin{proof}
Consider two point objects $\CO_L$ and $\CO_B$ 
traversing $\cfev_i$ and $B$, respectively (Figure \ref{fig:pathBExample} depicts an instance of $\cfev_i$ and $B$).
To prove the lemma, we show that $\CO_L$ and $\CO_B$ can walk
their respective curves, from beginning to
the end, while keeping distance $1$ to each other
. 

The base case of induction holds as follows 
(see Figure \ref{fig:PathBBaseCase} for an illustration):
Table \ref{tab:BaseCasePathB} lists  pairwise location of 
$\CO_L$ and $\CO_B$, where the distance of each pair is less or equal to $1$.
Therefore, $\CO_B$ can walk from $u$ to $g_1$ while
keep distance one to $\CO_L$.

\begin{table}[h]
\centering
\begin{tabular}{ r | l | l  }
if $x_i \in C_1$   & location of $\CO_B$ & location of $\CO_L$  
 \\
\hline
    
&  $u$ & $u$  \\
&  				 $h_1$ s.t.  $\| h_1\mu_1 \| \le \eps$ & $\mu_1 = (-4,-1)$\\

				&    $h_2 = \Dir{ug_1} \dashv	 \Seg{s_1c_1} $ & $\mu_2 = M(s_1c_1)$   \\
&  				 			      $h_2 $ & $c_1$  \\
&    $\Seg{ug_1} \dashv \Seg{c_1w_1}$ &  $\Seg{ug_1} \dashv \Seg{c_1w_1}$ \\
&  				 		     $w_1 \perp ug_1 $ &   $w_1$  \\
& $g_1 $  & $\Seg{w_1\alpha_1}	\dashv \Seg{c_1g_1}$	      \\

\hline
if $\neg x_i \in C_1$ 
&  $u$ & $u$  \\
&  				 $h_1$ s.t.  $\| h_1\mu \| \le \eps$ & $\mu_1 = (-4,-1)$\\
& $h_2 = \Dir{ug_1} \dashv	 \Seg{s_1c_1} $ &  $\mu_2 = \Dir{\mu_1w_1} \dashv	 \Seg{s_1c_1} $\\
& $\Seg{ug_1} \dashv \Seg{c_1w_1}$ & $w_1$ \\
&  &  $c_1$\\
& $g_1$ &  $M(c_1g_1)$\\

\hline
if $x_i \notin C_{1} \& \neg x_i \notin C_{1}$ & $u$ & $u$  \\
&  				 $h_1$ s.t.  $\| h_1\mu \| \le \eps$ & $\mu_1 = (-4,-1)$\\
& $h_2 = \Dir{ug_1} \dashv	 \Seg{s_1c_1} $ &  $\mu_2 = \Dir{\mu_1w_1} \dashv	 \Seg{s_1c_1} $\\
& $\Seg{ug_1} \dashv \Seg{c_1w_1}$ & $w_1$ \\
&  &  $c_1$\\
&  &  $w_1$\\
& $g_1 $  & $\Seg{w_1\alpha_1}	\dashv \Seg{c_1g_1}$	      \\

\end{tabular}
\vspace{0.2 in}
\caption{Pairwise location of $\CO_B$ and $\CO_L$, to prove the base case of induction in Lemma \ref{lemma:PathB} }
\label{tab:BaseCasePathB}
\end{table}

Assume inductively that $\CO_L$ and $\CO_B$ have feasibly walked along 
their respective curves, until $\CO_B$ reached $g_j$.
Then, as the induction step, 
we 
show that
$\CO_B$ can walk to $s_{j+1}$ and then to $g_{j+2}$ 
, while keeping distance $1$ to $\CO_L$.
This is shown in Table \ref{tab:PathB}
 (see Figure \ref{fig:PathB} for an illustration).

\begin{table}[h]
\centering
\begin{tabular}{ r | l | l  }
  & location of $\CO_B$ & location of $\CO_L$  
 \\
\hline
   if $x_i \in C_j$  & $g_j$ & $\Dir{\alpha_{j-1}w_j} 	\dashv \Seg{c_jg_j}$ \\
   if $\neg x_i \in C_j$  & $g_j$ & $M(\Seg{c_jg_j})$\\
   if $x_i \notin C_j \& \neg x_i \notin C_j$  & $g_j$ & $\Dir{\alpha_{j-1}w_j}	\dashv \Seg{c_jg_j}$\\

\hline
	&  $h_3$ s.t.  $\| h_3\alpha_j \| \le \eps$ & $\alpha_j$\\
	&  	$h_4$ s.t.  $\| h_4 \beta_j \| \le \eps$ & $\beta_j$\\

\hline
if $x_i \in C_{j+1}$ &  				 $s_{j+1}$  & $\Seg{\beta_jw_{j+1}}	\dashv \Seg{c_{j+1}s_{j+1}}$\\
& $w_{j+1} \perp \Seg{s_{j+1}g_{j+2}}$ & $w_{j+1}$\\
& $z_{j+1}$ & $z_{j+1}$\\

&  $\Seg{g_{j+1}c_{j+1}}	\dashv \Seg{s_{j+1}g_{j+2}}$ & $c_{j+1}$ \\
&  & $M(\Seg{c_{j+1}g_{j+1}})$
 \\

if $\neg x_i \in C_{j+1}$ &  				 $s_{j+1}$  & $M(\Seg{s_{j+1}c_{j+1}})$\\

 & $z_{j+1}$ & $c_{j+1}$ \\
 &  & $w_{j+1}$ \\

 &  $\Seg{g_{j+1}c_{j+1}}	\dashv \Seg{s_{j+1}g_{j+2}}$ & $\Seg{g_{j+1}c_{j+1}}	\dashv \Seg{w_{j+1}\alpha_{j+1}}$ \\

if $x_i \notin C_{j+1} \& \neg x_i \notin C_{j+1}$ &  				 $s_{j+1}$  & $\Seg{\beta_jw_{j+1}}	\dashv \Seg{c_{j+1}s_{j+1}}$\\
& $z_{j+1}$ & $w_{j+1}$\\
 & & $c_{j+1}$ \\
 & & $w_{j+1}$ \\
&  $\Seg{g_{j+1}c_{j+1}}	\dashv \Seg{s_{j+1}g_{j+2}}$ & $\Seg{g_{j+1}c_{j+1}}	\dashv \Seg{w_{j+1}\alpha_{j+1}}$ \\

\hline

	&  $h_5$ s.t.  $\| h_5\alpha_{j+1} \| \le \eps$ & $\alpha_{j+1}$\\
	&  	$h_6$ s.t.  $\| h_6 \beta_{j+1} \| \le \eps$ & $\beta_{j+1}$\\
\hline

if $ x_i \in C_{j+2}$ &  		$\Seg{s_{j+2}c_{j+2}}	\dashv \Seg{s_{j+1}g_{j+2}}$ 		   & $M(\Seg{s_{j+2}c_{j+2}})$\\

 & $z_{j+2}$ & $c_{j+2}$ \\
 &  & $w_{j+2}$ \\

 &  $g_{j+2}$ & $\Seg{g_{j+2}c_{j+2}}	\dashv \Seg{w_{j+2}\alpha_{j+2}}$ \\

if $\neg x_i \in C_{j+2}$ &  				$\Seg{s_{j+2}c_{j+2}}	\dashv \Seg{s_{j+1}g_{j+2}}$ 		   &   $\Seg{s_{j+2}c_{j+2}}	\dashv \Seg{\beta_{j+1}w_{j+2}}$   \\
 & $z_{j+2}$ & $w_{j+2}$ \\
 &  & $c_{j+2}$ \\

 &   $g_{j+2}$ & $M(\Seg{c_{j+2}g_{j+2}})$ \\

if $x_i \notin C_{j+2} \& \neg x_i \notin C_{j+2}$ &  				
$\Seg{s_{j+2}c_{j+2}}	\dashv \Seg{s_{j+1}g_{j+2}}$ 		   &   $\Seg{s_{j+2}c_{j+2}}	\dashv \Seg{\beta_{j+1}w_{j+2}}$  \\
& $z_{j+2}$ & $w_{j+2}$\\
 & & $c_{j+2}$ \\
 & & $w_{j+2}$ \\
&  $g_{j+2}$ & $\Seg{g_{j+2}c_{j+2}}	\dashv \Seg{w_{j+2}\alpha_{j+2}}$ \\

\end{tabular}
\caption{Distance between pair of points is less or equal to one}
\label{tab:PathB}
\end{table}

Finally, if $k$ is an odd number, then  
$\Dir{g_kv}$ is the last segment along $B$, otherwise, 
$\Dir{s_kv}$ is the last one. In any case, 
that edge crosses  circle $\CB(\eta,1)$, where $\eta$ is the last vertex of 
$\cfev_i$ before $v$ (point $\eta$ is computed after the condition checking 
in line \ref{l:ComputeV} of 
Algorithm \ref{alg:reduction}). Therefore, 
 $\CO_B$ can walk to $v$, while keeping distance $1$ to $\CO_L$.

\qed
\end{proof}

\noindent {\bf  Proof of Lemma \ref{lemma:NoSwitchFromAtoB}}:
\begin{proof}
Notice that we have placed $cl_{i+1}$ points far enough from 
$cl_{i}$ points so that 
no curve can go to $cl_{i+1}$
and come back to $cl_i$ and stay 
in $1$-\Frechet distance to $\cfev_i$.
Therefore, to prove the lemma, 
we only focus on two consecutive c-squares.
We show that no subcurve $l' \subseteq \cfev_i$ exists such 
that (for an illustration, see Figure \ref{fig:noswitch}) :

\begin{itemize}

\item $\distF(l',\Dir{\sma_j\gre_j}) \le 1$ because:

\

for all $j$, $1 \le j \le k$, point $c_j$ is always a vertex of $\cfev_i$. 
A point on $\cfev_i$ in distance 1 
to $\sma_j$ lies before $c_j$ 
in direction $\Dir{\cfev_i}$, 
while a point on $\cfev_i$ in distance 1 
to point $\gre_j$ lies after $c_j$ in direction $\Dir{\cfev_i}$.
Since $dist(c_j, \Seg{\sma_j\gre_j}) >1$, 
no subcurve $l' \subseteq \cfev_i$ exists such that 
$\distF(l',\Dir{\sma_j\gre_j}) \le 1$.

\

\item $\distF(l',<\sma_jc_j\gre_{j}>) \le 1$ or $\distF(l',<\gre_jc_j\sma_{j}>) \le 1$, because:

\

For all $j$, $1 \le j \le k$, $w_j$
is a vertex of $\cfev_i$. 
A point on $\cfev_i$ in distance 1 
to $\sma_j$ lies before $w_j$ 
in direction $\Dir{\cfev_i}$, 
while a point on $\cfev_i$ in distance 1 
to point $\gre_j$ lies after $w_j$ in direction $\Dir{\cfev_i}$.
Since $dist(w_j, \Seg{\sma_jc_j}) >1$ and 
$dist(w_j, \Seg{\gre_j\gre_j}) >1$,
no subcurve $l' \subseteq \cfev_i$ exists such that 
$\distF(l', <\sma_jc_j\gre_{j}> ) \le 1$.
Similarly,  no subcurve $l' \subseteq \cfev_i$ exists such that 
$\distF(l', <\gre_jc_j\sma_j> ) \le 1$.

\

\item $\distF(l',<\sma_j\sma_{j+1}>) \le 1$ or  $\distF(l',<\gre_j\gre_{j+1}>) \le 1$ because:

\

Vertex $\alpha_{i}$ of $\cfev_{i}$
guarantees the first part as $dist( \alpha_{i},\Seg{\sma_j\sma_{j+1}}) > 1 $, 
and vertex $\beta_{i}$ of $\cfev_{i}$
guarantees the second part, 
as $dist( \beta_{i},\Seg{\gre_j\gre_{j+1}}) > 1$.

\

\item $\distF(l',<c_jc_{j+1}>) \le 1$,  because $dist( \alpha_{i},\Seg{c_jc_{j+1}}) > 1$

\

\item $\distF(l',<uc_1>) \le 1$, because $dist( (-4,-1),\Seg{uc_1}) >1$ 

\

\item  $\distF(l',<c_j\gre_{j+1}>) \le 1$, because $dist( \alpha_i,\Seg{c_j\gre_{j+1}}) >1$ 

\

\item  $\distF(l',<c_j\sma_{j+1}>) \le 1$, because $dist( \alpha_i,\Seg{c_j\sma_{j+1}}) >1$ 

\

\item  $\distF(l',<c_kv>) \le 1$, because $dist( \eta,\Seg{c_kv}) >1$ 
\end{itemize}

\end{proof}

\begin{figure}
	\centering
	\includegraphics[width=0.9\columnwidth]{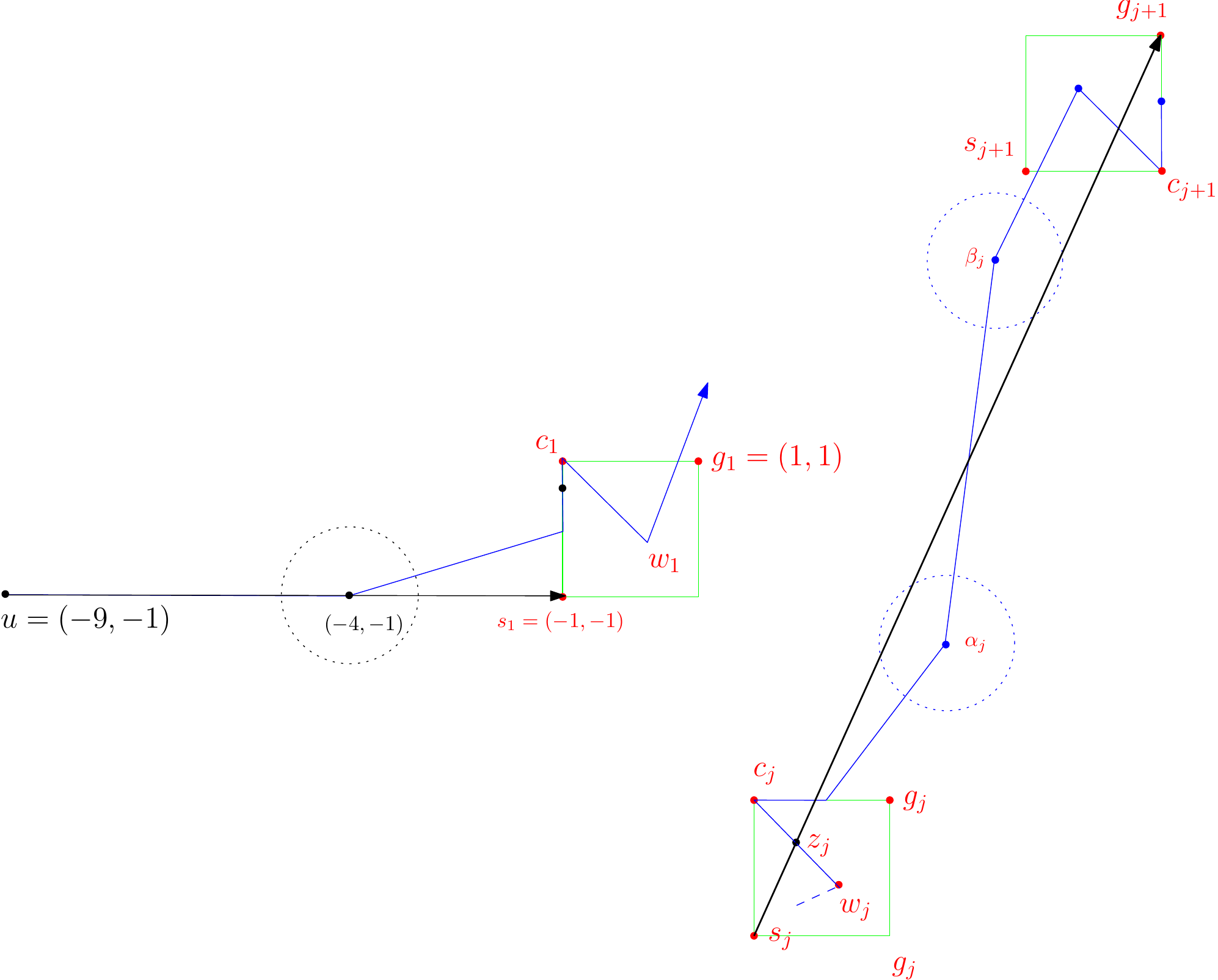}
	\caption{Proof of Lemma \ref{lemma:NoSwitchFromAtoB}}
	\label{fig:noswitch}
\end{figure}

\begin{table}[h]
\centering
\begin{tabular}{ r | l | l  }
if $x_i \in C_1$   & location of $\CO_A$ & location of $\CO_L$  
 \\
\hline
    
&  $u$ & $u$  \\
&  				 $h_1$ s.t.  $\| h_1\mu_1 \| \le \eps$ & $\mu_1 = (-4,-1)$\\

& $s_1 $  & $M(\Seg{s_1c_1})$	      \\

\hline
if $\neg x_i \in C_1$ 
&  $u$ & $u$  \\
&  				 $h_1$ s.t.  $\| h_1\mu_1 \| \le \eps$ & $\mu_1 = (-4,-1)$\\

& $s_1$ &  $ \Dir{\mu_1w_1} \dashv	 \Seg{s_1c_1} $\\

\hline
if $x_i \notin C_{1} \& \neg x_i \notin C_{1}$ & $u$ & $u$  \\
&  				 $h_1$ s.t.  $\| h_1\mu_1 \| \le \eps$ & $\mu_1 = (-4,-1)$\\
& $s_1$ &  $ \Dir{\mu_1w_1} \dashv	 \Seg{s_1c_1} $\\

\end{tabular}
\vspace{0.2 in}
\caption{Proof of Lemma \ref{lemma:PathA}, the base case of induction}
\label{tab:BaseCasePathA}
\end{table}

\noindent {\bf  Proof of Lemma \ref{lema:convexhull}}:

\begin{proof}
We prove the lemma by induction on the number of 
edges in $\CH(\pset)$. 
Let  $e_1e_2..e_n$ be the edges of $\Pol$, numbered 
after an arbitrary vertex of $\Pol$ in clockwise order. 
Obviously, to have a feasible curve, 
every point of $\pset$ must be located within some cylinder $\CC(e_i, \eps)$.
To establish the lemma, we show that when a feasible curve
exists through $\pset$, the condition holds. 

Imagine a point object
$\CO_C$ which cycles $\CH(\pset)$, 
starting from a vertex of 
$\CH(S)$, say point $u$, and ending at the same point.
Let $u'$ be a point on the boundary of $\Pol$ such that 
$\| uu' \| \le \eps$.
Imagine another point object $\CO_P$  
which starts from $u'$, 
walks on the boundary of $\Pol$ until it reaches the same point $u'$.
Both of the objects walk clockwise. 

To handle the base case of the induction, assume that 
the convex hull has an additional edge consist 
of only  point $u$. Since the distance between
$\CO_C$ and $\CO_P$ is less 
than $\eps$ at the start,
the base case of the induction holds. 
Assume inductively that $\CO_C$  and $\CO_P$
walk along their path, keeping distance $\eps$
to each other, until 
$\CO_C$  reaches to vertex $v$ in $\CH(S)$
and $\CO_P$ reaches to point $v' \in e_i$ on the boundary of $\Pol$
where $\|vv'\| \le \eps$.

Let $w$ be the next  vertex  on $\CH(S)$
after $v$. If $w$ is located within the same 
cylinder $\CC(e_i, \eps)$ as $v$, 
then by Observation \ref{obs:concat}, the lemma holds. 
Otherwise, assume that 
$w$ is in $\CC(e_j, \eps)$, $i<j$ and 
$w'$ is a point in $e_j$ where $\|ww'\| \le \eps$. 
Consider the $\eps$-ball around each of the vertices of 
the polygon between points $v'$ and $w'$. 
It suffices to show that edge $e = \Dir{vw}$ crosses 
each of those balls.

Assume, for the sake of contradiction, that edge $e$ 
does not cross one of those balls, say the one around 
vertex $p$. 
Assume w.l.o.g. that all points 
in $\pset$ lie to the right of $e$.
Then, two cases occur as illustrated in Figure \ref{fig:convexHull}: 
case (a), where $\CB(p,e)$ lies to the right of $e $, which contradict 
the fact that the polygon is convex;
or case (b), where $\CB(p,e)$ lies to the left of $e $, which contradicts 
our assumption that a feasible curve exists, 
because
no point in $\pset$ is in $\eps$-distance to vertex $p$.
\qed
\end{proof}

\begin{figure}{t}
	\centering
	\includegraphics[width=0.9\columnwidth]{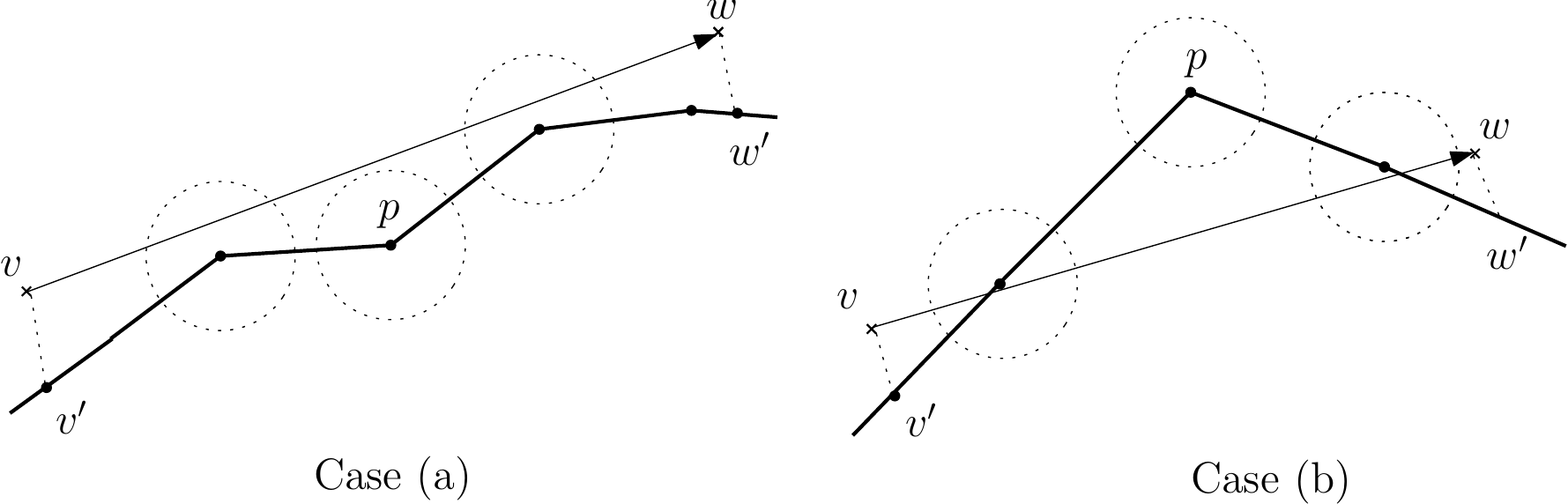}
	\caption{Proof of Lemma \ref{lema:convexhull}}
	\label{fig:convexHull}
\end{figure}

\end{document}